\renewcommand{\labelitemi}{$\bullet$}
\tikzstyle{level 1}=[level distance=2.5cm, sibling distance=4.5cm]
\tikzstyle{level 2}=[level distance=2.3cm, sibling distance=4.5cm]
\tikzstyle{term} = [text centered]
\newcommand{\bnf}{::=}
\newcommand{\midd}{\; \; \mbox{\Large{$\mid$}}\;\;}
\newcommand{\termone}{M}
\newcommand{\termtwo}{N}
\newcommand{\termthree}{L}
\newcommand{\termfour}{P}
\newcommand{\termfive}{S}
\newcommand{\termsix}{V}
\newcommand{\termseven}{A}
\newcommand{\varone}{x}
\newcommand{\abstr}[2]{\lambda #1.#2}
\newcommand{\subst}[3]{#1\{#2/#3\}}
\newcommand{\rdxone}{R}
\newcommand{\rdxtwo}{Q}
\newcommand{\rdxs}[1]{\mathcal{R}_{#1}}
\newcommand{\contone}{C}
\newcommand{\conttwo}{D}
\newcommand{\contthree}{E}
\newcommand{\redbeta}{\longrightarrow_\beta}
\newcommand{\redbetaclo}{\xtwoheadrightarrow{}_\beta}
\newcommand{\red}{\rightarrow}
\newcommand{\redlo}{\longrightarrow_\pslo}
\newcommand{\coredlo}{{}_\pslo\!\longleftarrow}
\newcommand{\redri}{\longrightarrow_\psri}
\newcommand{\redbetared}[1]{\overset{#1}{\xtwoheadrightarrow{}_\beta}}
\newcommand{\redlosteps}[1]{\longrightarrow_\pslo^#1}
\newcommand{\redbetasteps}[1]{\longrightarrow^{#1}_\beta}
\newcommand{\psuni}{\mathsf{U}}
\newcommand{\pslo}{\mathsf{LO}}
\newcommand{\psri}{\mathsf{RI}}
\newcommand{\dist}[1]{\mathsf{Dist}\left(#1\right)}
\newcommand{\supp}[1]{\mathbf{Supp}\left(#1\right)}
\newcommand{\pdist}[1]{\mathsf{PDist}\left(#1\right)}
\newcommand{\explen}[1]{\mathsf{ExpLen}_{#1}}
\newcommand{\nsteps}[1]{\mathsf{Steps}_{#1}}
\newcommand{\poly}[1]{\mathsf{Poly}(#1)}
\newenvironment{varitemize}
{
\begin{list}{\labelitemi}
{\setlength{\itemsep}{0pt}
 \setlength{\topsep}{0pt}
 \setlength{\parsep}{0pt}
 \setlength{\partopsep}{0pt}
 \setlength{\leftmargin}{15pt}
 \setlength{\rightmargin}{0pt}
 \setlength{\itemindent}{0pt}
 \setlength{\labelsep}{5pt}
 \setlength{\labelwidth}{10pt}
}}
{
 \end{list} 
}
\newcounter{numberone}
\newenvironment{varenumerate}
{
	\begin{list}{\arabic{numberone}.}
		{
			\usecounter{numberone}
			\setlength{\itemsep}{0pt}
			\setlength{\topsep}{0pt}
			\setlength{\parsep}{0pt}
			\setlength{\partopsep}{0pt}
			\setlength{\leftmargin}{15pt}
			\setlength{\rightmargin}{0pt}
			\setlength{\itemindent}{0pt}
			\setlength{\labelsep}{5pt}
			\setlength{\labelwidth}{15pt}
		}}
		{
		\end{list} 
	}
\newdefinition{remark}{Remark}
\newdefinition{example}{Example}
\newtheorem{theorem}{Theorem}
\newtheorem{lemma}{Lemma}
\newtheorem{proposition}{Proposition}
\newtheorem{corollary}{Corollary}
\newdefinition{definition}{Definition}
\newproof{proof}{Proof}
\newdefinition{notation}{Notation}
\begin{document}
\title{On Randomised Strategies in the $\lambda$-Calculus\tnoteref{t1}\tnoteref{t2}}
\tnotetext[t1]{This work was partially supported by ANR grant \textsc{Elica} ANR-14-CE25-0005 and Inria/JSPS EA \textsc{Crecogi}.}
\tnotetext[t2]{A preliminary version of this paper appeared in the Proceedings of the 19th Italian Conference on Theoretical Computer Science, ICTCS 2018 \cite{Conf2018}.}
%
%
\author{Ugo Dal Lago}
\ead{ugo.dallago@unibo.it}

\author{Gabriele Vanoni}
\ead{gabriele.vanoni2@unibo.it}

\address{Università di Bologna \& INRIA Sophia Antipolis}

\begin{abstract}
  In this work we study randomised reduction strategies---a notion
  already known in the context of abstract reduction systems---for the
  $\lambda$-calculus. We develop a simple framework that allows us to
  prove a randomised strategy to be positive almost-surely
  normalising. Then we propose a simple example of randomised strategy
  for the $\lambda$-calculus that has such a property and we show why
  it is non-trivial with respect to classical deterministic strategies
  such as leftmost-outermost or rightmost-innermost. We conclude
  studying this strategy for two sub-$\lambda$-calculi, namely those
  where duplication and erasure are syntactically forbidden, showing
  some non-trivial properties.
\end{abstract}

\begin{keyword}
  $\lambda$-calculus  \sep probabilistic rewriting \sep reduction strategies.
\end{keyword}

\maketitle 

\section{Introduction}
There are different possible \emph{strategies} you can follow to
evaluate expressions. Some are better than others, and bring you to
the result in a lower number of steps. Since programs in pure
functional languages are essentially expressions, the problem of
defining good strategies is particularly interesting. Finding
\emph{minimal} strategies, i.e. strategies that minimise the number of
steps to normal form, seems even more interesting. However, the problem
of picking the redex leading to the reduction sequence of minimal length
has been proven undecidable for the $\lambda$-calculus
\cite[Section~13.5]{barendregt_lambda_1984}, the core of
pure functional programming languages. In the last decades several reduction
strategies have been developed. Their importance is crucial in the
study of evaluation orders in functional programming languages, and defines an important part of
their semantics. The reader can think about the differences between
\texttt{Haskell} (\emph{call-by-need}) and \texttt{Caml}
(\emph{call-by-value}). The $\lambda$-calculus is a good abstraction
to study reduction mechanisms because of its very simple
structure. In fact, although \emph{Turing-complete}, it can be seen as
a rewriting system \cite{terese_term_2003}, where terms can be formed
in only two ways, by abstraction and application, and only one
rewriting rule, the $\beta$-rule, is present. Reduction strategies for
the $\lambda$-calculus are typically defined according to the position
of the contracted \emph{redex} e.g. \emph{leftmost-outermost},
\emph{leftmost-innermost}, \emph{rightmost-innermost}. As the following
trivial examples show, the adopted strategy can indeed have a strong
impact on evaluation performances, and possibly also on termination
behaviour.
\begin{example}\label{example:canc}
  Let $\bm{I}=\lambda x.x$, $\bm{\omega}=\lambda x.xx$ and
  $\bm{\Omega}=\bm{\omega\omega}$. We now consider the reduction of
  the term $(\lambda x.\bm{I})\bm{\Omega}$ according to two different
  reduction strategies, namely leftmost-outermost ($\pslo$) and
  rightmost-innermost ($\psri$).
  \begin{align*}
  (\lambda x.\bm{I})\bm{\Omega}&\redlo \bm{I}\\
  (\lambda x.\bm{I})\bm{\Omega}&\redri(\lambda x.\bm{I})\bm{\Omega}\redri(\lambda x.\bm{I})\bm{\Omega}\redri\cdots
  \end{align*}
  The term $\bm{\Omega}$ is a looping combinator i.e. it reduces to
  itself. However, in $(\lambda x.\bm{I})\bm{\Omega}$ the argument
  $\bm{\Omega}$ is discarded since the function returns the identity combinator.
  Thus, leftmost-outermost (akin to call-by-name in functional
  programming languages) yields a normal form in one step. Conversely,
  rightmost-innermost (akin to call-by-value) continues to evaluate
  the argument $(\lambda x.\bm{I})\bm{\Omega}$, though it is useless, and
  rewrites always the same term, yielding to a non-terminating
  process.
\end{example}
\begin{example}\label{example:copy}
  We now consider the reduction of the term
  $(\lambda x.xx)(\bm{II})$, according to $\pslo$ and $\psri$
  strategies, as above.
  \begin{align*}
    (\lambda x.xx)(\bm{II})&\redlo (\bm{II})(\bm{II}) \redlo
    \bm{I}(\bm{II}) \redlo \bm{II} \redlo \bm{I}\\
    (\lambda x.xx)(\bm{II})&\redri (\lambda x.xx)\bm{I} \redri \bm{II} \redri \bm{I}
  \end{align*}
  Here the argument $\bm{II}$ is duplicated and thus it is much more convenient
  to reduce it before it is copied, as in
  rightmost-innermost. Leftmost-outermost does, indeed, some useless
  work.
\end{example}
In general, innermost strategies are considered more efficient, because
programs often need to copy their arguments (as in
Example \ref{example:copy}). However, as seen in Example
\ref{example:canc}, rightmost-innermost is not normalising: there
exist terms which have a normal form which, however, can be missed by
innermost strategies. Instead, a classical result by Curry and Feys
\cite{curry_combinatory_1958} states that the leftmost-outermost strategy is
\emph{normalising}, i.e. it always rewrites terms to their normal
norm, if it exists. Thus, leftmost-outermost is slower, but
safer. Could we get, in a sense, the best of both worlds?  All
reduction strategies for the $\lambda$-calculus in the literature up
to now are \emph{deterministic}, i.e. they are (partial) functions on
(possibly shared representations of) terms. There is however some work on probabilistic term
rewriting systems
\cite{bournez_proving_2005,ferrer_fioriti_probabilistic_2015,avanzini_probabilistic_2018},
in particular regarding termination, and about randomised strategies
in the abstract~\cite{bournez_probabilistic_2002}. What would
happen if the redexes to reduce were picked according to some
probability distribution? How many steps would a term need to reach a
normal form \emph{on the average}?

In this work we consider a simple \emph{randomised}
reduction strategy $\mathsf{P}_\varepsilon$, where the $\pslo$-redex
is reduced with probability $\varepsilon$ and the $\psri$-redex is
reduced with probability $1-\varepsilon$. This is not necessarily
the most interesting example, but certainly a good starting point
in our investigation. The \emph{uniform} randomised strategy,
which picks one between \emph{all} the redexes in the term uniformly
at random looks more natural, although much more difficult to analyse:
there is no fixed lower bound on the probability of picking \emph{the standard}
redex, i.e. the leftmost-outermost one. The following are our main results.
\begin{varitemize}
\item
  The uniform strategy is \emph{not} positive almost surely normalising. 
\item
  For every, $0<\varepsilon\leq 1$, the strategy
  $\mathsf{P}_\varepsilon$ is positive almost-surely normalising on
  weakly normalising terms. That means that if a term $\termone$ is
  weakly normalising, then the expected number of reduction steps from
  $\termone$ to its normal form with strategy $\mathsf{P}_\varepsilon$
  is finite. This is in contrast to the rightmost-innermost strategy, as can be
  seen from Example \ref{example:canc}. Rightmost-innermost, in other
  words, is the only non-normalising strategy in the family
  $\{\mathsf{P}_\varepsilon\}_{0\leq\varepsilon\leq 1}$, namely $\mathsf{P}_0$.
\item The function $\explen{\termone}(\varepsilon)$ representing the 
  average number of steps a term $\termone$ needs to reach normal form under strategy
  $\mathsf{P}_\varepsilon$ is a power series.
\item
  The family of strategies
  $\{\mathsf{P}_\varepsilon\}_{0<\varepsilon<1}$ is shown to be
  non-trivial. In other words, there exists a class of terms and
  a real number $0<\mu<1$ for which $\mathsf{P}_\mu$ outperforms, on average, both
  $\pslo$ and $\psri$. This shows that randomisation can
  indeed be useful in this context. This is not surprising:
  in computer science there are many situations in which adding a random factor
  improves performances, e.g. in randomised 
  algorithms~\cite{motwani_randomized_1995}, which are
  often faster (on average) than any (known) deterministic algorithm. 
  Furthermore we show that also the converse can hold, i.e.
  there exists a term for which both $\pslo$ and $\psri$ outperform $\mathsf{P}_\varepsilon$
  for every $0<\varepsilon<1$.
\item
  The expected number of reduction steps to
  normal form with strategy $\mathsf{P}_\varepsilon$, seen as a function
  on $\varepsilon$, has minimum at $1$ for terms
  in the affine $\lambda$-calculus $\lambda A$  and maximum at $1$ for $\lambda 
  I$-terms. However, this function is neither monotonic nor convex nor concave 
  in either $\lambda A$ or $\lambda I$: already in simple calculi it can be 
  very chaotic.
\end{varitemize}
The rest of this paper is structured as follows. In Section
\ref{section:basic}, basic definitions and results for the untyped
$\lambda$-calculus are given. In Section \ref{section:prob} we present
our model of fully probabilistic abstract reduction systems, and we
give a sufficient condition for positive almost-sure termination. In Section
\ref{section:prostra} we apply this model to the $\lambda$-calculus,
defining a randomised reduction strategy and collecting some
results. Section \ref{section:conclusions} concludes the paper
with some ideas for further investigations on the subject.
\subsection*{Acknowledgements}
Our interest in randomised strategies comes from a discussion the first author 
had with Prakash Panangaden.
We thank also the anonymous reviewers for their many and insightful comments.
\section{Basic Notions and Notations}\label{section:basic}
The following definitions are standard and are adapted from \cite{terese_term_2003}.
\begin{definition}
    Assume a countable infinite set $\mathcal{V}$ of variables. The
    $\lambda$-\emph{calculus} is the language of terms defined by
    the following grammar:
    $$
    \termone,\termtwo\bnf\varone\in\mathcal{V}\midd\termone\termtwo\midd\abstr{\varone}{\termone}
    $$
    We denote by $\Lambda$ the set of all $\lambda$-terms. As usual,
    $\lambda$-terms are taken modulo $\alpha$-equivalence, which allows
    to appropriately define the capture-avoiding substitution of all
    the free occurrences of $\varone$ for $\termtwo$ in $\termone$,
    denoted by $\subst{\termone}{\termtwo}{\varone}$.
\end{definition}
\begin{lemma}[Substitution Lemma]
	For any $\termone,\termtwo,\termthree\in\Lambda$, if $x\neq y$ and $x$ is not free in  $\termthree$, then
	$$
	\termone\{\termtwo/x\}\{\termthree/y\}=\termone\{\termthree/y\}\{\termtwo\{\termthree/y\}/x\}.
	$$
\end{lemma}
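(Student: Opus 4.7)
The plan is to proceed by structural induction on $\termone$, exploiting Barendregt's variable convention, which is already assumed in the excerpt. This convention trivializes the bound-variable clashes that would otherwise be the painful part of the argument: whenever we encounter an abstraction $\lambda z.\termone'$, we may freely assume that $z$ is distinct from $x$ and $y$ and does not occur free in $\termtwo$ or $\termthree$, so substitutions pass under the binder without any $\alpha$-renaming.

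For the base cases, I would split on whether $\termone$ is a variable equal to $x$, equal to $y$, or a third variable $z \notin \{x,y\}$. If $\termone \equiv x$, both sides reduce to $\termthree\{\termthree/y\}$ via one step each, using that $x\{\termtwo/x\} \equiv \termtwo$ and $x\{\termthree/y\} \equiv x$ (since $x \neq y$), followed by the second substitution. Wait, let me recheck: the left-hand side evaluates $x\{\termtwo/x\}\{\termthree/y\} \equiv \termtwo\{\termthree/y\}$, while the right-hand side evaluates $x\{\termthree/y\}\{\termtwo\{\termthree/y\}/x\} \equiv x\{\termtwo\{\termthree/y\}/x\} \equiv \termtwo\{\termthree/y\}$, so they agree. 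If $\termone \equiv y$, the left-hand side gives $y\{\termtwo/x\}\{\termthree/y\} \equiv y\{\termthree/y\} \equiv \termthree$, while the right-hand side gives $y\{\termthree/y\}\{\termtwo\{\termthree/y\}/x\} \equiv \termthree\{\termtwo\{\termthree/y\}/x\}$, and here is exactly where the hypothesis $x \notin FV(\termthree)$ is used: substituting for a variable absent from $\termthree$ is the identity, so this equals $\termthree$. The third case $\termone \equiv z$ with $z \neq x,y$ is immediate, as both substitutions act as the identity on $z$.

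For the inductive cases, application $\termone \equiv \termone_1 \termone_2$ is a one-liner: substitution distributes componentwise over application, so the claim follows by applying the induction hypothesis to $\termone_1$ and $\termone_2$ separately and reassembling. For abstraction $\termone \equiv \lambda z.\termone'$, the variable convention guarantees $z \notin \{x, y\} \cup FV(\termtwo) \cup FV(\termthree)$, so in both sides the two outermost substitutions commute with the binder and we are left to compare $\termone'\{\termtwo/x\}\{\termthree/y\}$ with $\termone'\{\termthree/y\}\{\termtwo\{\termthree/y\}/x\}$ under the same binder; this is exactly the induction hypothesis applied to $\termone'$.

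The only genuine obstacle is the abstraction case if one did \emph{not} adopt Barendregt's convention: one would then need to introduce a fresh variable and verify that $\alpha$-renaming commutes appropriately with both substitutions, which is tedious but routine. Since the excerpt explicitly adopts the convention, no such detour is needed, and the proof reduces to the three variable subcases plus two-line inductive steps.
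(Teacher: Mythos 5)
Your proof is correct and is exactly the classical structural-induction argument (Barendregt 2.1.16) that the paper implicitly relies on: the paper states this lemma without proof, and your case analysis --- with the hypothesis $x\not\in FV(\termthree)$ doing its work precisely in the $\termone\equiv y$ case, and the variable convention disposing of the abstraction case --- is the standard and expected route. The only thing to fix is presentational: delete the false first sentence of the $\termone\equiv x$ case (both sides are $\termtwo\{\termthree/y\}$, not $\termthree\{\termthree/y\}$) rather than leaving the ``wait, let me recheck'' self-correction in the final text.
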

Reduction will be defined based on the notion of a context, which
needs to be given a formal status.
\begin{definition}
  We define (one-hole) \emph{contexts} by the following grammar:
  $$
  \contone,\conttwo\bnf\Box\midd\contone\termone\midd\termone\contone\midd\abstr{\varone}{\contone}
  $$
  We denote with $\Lambda_\Box$ the set of all contexts.
\end{definition}
Intuitively, contexts are $\lambda$-terms with a hole that can be
filled with another $\lambda$-term. We indicate with
$\contone[\termone]$ the term obtained by replacing $\Box$ with
$\termone$ in $\contone$, contexts can in fact bind variables. Those $\lambda$-terms in the form
$\rdxone=(\lambda x.\termone)\termtwo$ are called \emph{$\beta$-reducible expressions} or
$\beta$-\emph{redexes} and $\termone\{\termtwo/x\}$ is said to be
the \emph{contractum} of $\rdxone$. This is justified by the following definition.
\begin{definition}
	The relation of $\beta$-\emph{reduction}, $\redbeta\subseteq\Lambda\times\Lambda$, is defined as
	$$
	\redbeta = \{(\contone[(\lambda x.\termone)\termtwo],\, \contone[\termone\{\termtwo/x\}])\, |\, \termone, \termtwo\in\Lambda, \contone\in\Lambda_\Box\}.
	$$
	We denote by $\redbetaclo$ the reflexive and transitive closure of $\redbeta$.
\end{definition}

We introduce some concepts of rewriting theory that will be useful in the following sections. We borrow terminology from \cite{vanoostrom_2007}, except for the fact that we do not label reduction steps.

\begin{definition}[ARS]
	An \emph{abstract reduction system} (ARS) is a pair $(A,\rightarrow)$ where $A$ is a set of \emph{objects} with cardinality at most countable and $\rightarrow\,\subseteq A\times A$ is the \emph{reduction relation}.
\end{definition}

\begin{definition}[Sub-ARS]
	Let $\mathcal{A}=(A,\red_\alpha)$ and $\mathcal{G}=(G,\red_\gamma)$ be two 
	ARSs. $\mathcal{A}$ is a \emph{sub-ARS} of $\mathcal{G}$ if $A\subseteq G$ 
	and $\red_\alpha\subseteq\red_\gamma$.
\end{definition}

If $(a,b)\in\,\rightarrow$ for $a,b\in A$ in an ARS $(A,\red)$ , then we write 
$a\rightarrow b$ and if $a\red a_1\red\cdots\red a_{n-1}\red b$, we write 
$a\rightarrow^n b$. $\twoheadrightarrow$ is the reflexive and transitive 
closure of $\rightarrow$. A \emph{reduction sequence} is a finite or infinite 
sequence $\sigma:a_0\rightarrow a_1\rightarrow\cdots$. If $\sigma$ is finite, 
then $|\sigma|$ is the \emph{length} of $\sigma$. We say that $a\in A$ is in 
\emph{normal form} if there exist no $b\in A$ such that $a\rightarrow b$. We 
call $\mathbf{NF}(A)$ the subset of $A$ whose elements are the normal forms. An 
object $a$ is \emph{(weakly) normalising} if there exists a reduction sequence 
such that $a\twoheadrightarrow b$ and $b$ is in normal form. An object $a$ is 
\emph{strongly normalising} if every reduction sequence from $a$ is finite.

We can see the $\lambda$-calculus defined above as an ARS $(\Lambda,\redbeta)$.
We denote by $\Lambda_\mathsf{WN}$ the set of weakly normalising terms of
$\Lambda$ and by $\Lambda_\mathsf{SN}$ the set of strongly normalising terms of
$\Lambda$.
\subsection{Two Subcalculi of $\Lambda$}
Full $\lambda$-calculus is very powerful and flexible, but it has a complicated dynamics. Sometimes it is useful to restrict ourselves to a subset of terms. In particular we focus our attention on two subsystems where terms satisfy a predicate on the number of occurrences of free variables. These systems are meaningful because they are \emph{stable} w.r.t. $\beta$-reduction i.e. if
$\termone\in S$ and $\termone\redbeta\termtwo$ then $\termtwo\in
S$. The interested reader can find a comprehensive treatment in 
\cite{sinot_sub-$lambda$-calculi_2008}.
\subsubsection{The $\lambda I$-calculus.}
The $\lambda I$-calculus was the original calculus studied by Alonzo
Church in 1930s \cite{church_unsolvable_1936}, and
\cite{barendregt_lambda_1984} contains a whole section dedicated to
it. In $\lambda I$-calculus there is no \emph{cancellation}, in that
variables have to occur free \emph{at least once} when forming
abstractions. Terms of the $\lambda I$-calculus are not strongly
normalising in general. As an example, $\bm{\Omega}$ is a $\lambda I$-term.
One can prove, however, that on $\lambda I$-terms, weak normalisation
implies strong normalisation: in other words, all strategies are qualitatively
equivalent. In other words the equation $\Lambda_\mathsf{WN}=\Lambda_\mathsf{SN}$,
although not true in general holds in $\lambda I$. This does \emph{not} mean,
however, that all strategies are \emph{quantitatively} equivalent.

\subsubsection{The $\lambda A$-calculus.}
The $\lambda A$-calculus is the dual of $\lambda I$ and it is
often called \emph{affine} $\lambda$-calculus in the
literature. It is a very weak calculus in which variables bound by
abstractions occur \emph{at most once} free in the abstraction's body,
thus forbidding \emph{duplication}. The $\lambda A$-calculus is strongly
normalising, in the following strong sense: every reduction sequence from a term
$\termone$ has length bounded by the size of $\termone$. In other words,
the equation $\Lambda_\mathsf{WN}=\Lambda_\mathsf{SN}=\Lambda$,
although not true in general, holds in $\lambda A$.
\subsection{Reduction Strategies}
ARSs are sets endowed with a relation, and are thus a nondeterministic
model of computation. The notion of deterministic reduction strategy allows us
to fix one redex among the available ones, thus turning reduction into
a deterministic process.

\begin{definition}[Reduction Strategies]\label{def:strat}
	A \emph{reduction strategy} for an ARS $\mathcal{A}=(A,\red_\alpha)$ is a sub-ARS $(A,\red_\gamma)$ of $\mathcal{A}$, indicated with $\red_\gamma$ when the set of objects is clear from the context, having the same objects and normal forms of $\mathcal{A}$.
\end{definition}

\begin{definition}[Deterministic ARS]
	An ARS $(A,\rightarrow)$, is \emph{deterministic} if for each $a\in A$ there is at most one $b\in A$ such that $a\red b$.
\end{definition}

\begin{definition}
	Given an ARS $(A,\rightarrow)$ and $a_0\in A$, a finite reduction sequence $\sigma:a_0\rightarrow a_1\rightarrow\cdots\rightarrow a_n$ is \emph{under strategy} $\red_\gamma$ if $a_i\red_\gamma a_{i+1}$ for each $0\leq i<n$. An infinite reduction sequence $\sigma:a_0\rightarrow a_1\rightarrow\cdots a_i\rightarrow\cdots$ is under strategy $\red_\gamma$ if $a_i\red_\gamma a_{i+1}$ for each $i\geq 0$.
\end{definition}

Since according to Definition~\ref{def:strat} reduction strategies are (sub-)ARSs, \emph{deterministic reduction strategies} are just reduction strategies which are deterministic ARSs. In the following, we will often employ a different definition of deterministic strategy, which turns out to be equivalent to the previous one, but more convenient.
\begin{definition}[Deterministic Strategies]\label{def:determstrat}
	Given an ARS $(A,\rightarrow)$, a deterministic reduction
	strategy for $A$ is a partial function $\mathsf{S}:A\rightharpoonup A$
	such that $\mathsf{S}(a)$ is defined iff $a$ is not in
	normal form and $a\rightarrow\mathsf{S}(a)$ whenever $\mathsf{S}(a)$
	is defined.
\end{definition}

If $\sigma:a_0\rightarrow a_1\rightarrow\cdots\rightarrow a_n$ is a
reduction sequence under a deterministic strategy $\mathsf{S}$ and $a_n$ is in normal
form, we write $\nsteps{\mathsf{S}}(a_0)=n=|\sigma|$. If
$\sigma:a_0\rightarrow a_1\rightarrow\cdots$ is infinite, we say that
$\nsteps{\mathsf{S}}(a_0)=+\infty$. We can now define two deterministic 
reduction strategies for the $\lambda$-calculus that
will be useful in the following sections.
\begin{definition}
	\emph{Leftmost-outermost} $(\pslo)$ is a deterministic reduction
	strategy in which $\pslo(\termone)=\termtwo$ if and only if
	$\termone\redbeta\termtwo$ and the redex contracted in $\termone$ is
	the \emph{leftmost} among the ones in $\termone$ (measuring the
	position of a redex by its beginning symbol). If $\pslo(\termone)=\termtwo$,
	we write $\termone\redlo\termtwo$.
\end{definition}
Please note that the leftmost-outermost strategy is not always a deterministic strategy in the more general context of term rewriting systems \cite{terese_term_2003}, while in the $\lambda$-calculus it is. This will be fundamental in the definition of randomised strategies for the $\lambda$-calculus in next sections.
\begin{definition}
	\emph{Rightmost-innermost} $(\psri)$ is a deterministic reduction
	strategy in which $\psri(\termone)=\termtwo$ if and only if
	$\termone\redbeta\termtwo$ and the redex contracted in $\termone$ is
	the \emph{rightmost} among the ones in $\termone$ (measuring the
	position of a redex by its beginning symbol). Again, if
	$\psri(\termone)=\termtwo$, we write $\termone\redri\termtwo$.
\end{definition}

In the remainder of this section, we will give some basic preliminary
results about $\beta$-reduction, which form the basic blocks of our
subsequent analysis. Although standard, we prefer to state and prove
such results, for the sake of self-containedness. The following is a
technical lemma which basically says that $\beta$-reduction and
substitution commute. Crucially, the case in which the substituted
variable occurs \emph{at least} once is kept separate from the one in
which it occurs \emph{at most} once, this way allowing for a more
informative result.
\begin{lemma}\label{lemma:commutation}
	Let $\termone,\termtwo,\termthree\in\Lambda$. If $\termone\redbeta\termtwo$, then:
	\begin{varenumerate}
		\item $\termthree\{\termone/x\}\redbetasteps{\leq 1}\termthree\{\termtwo/x\}$ if $x$ is free at most once in $\termthree$;
		\item $\termthree\{\termone/x\}\redbetasteps{\geq 1}\termthree\{\termtwo/x\}$ if $x$ is free at least once in $\termthree$;
		\item $\termthree\{\termone/x\}\redbetared{}\termthree\{\termtwo/x\}$;
		\item $\termone\{\termthree/x\}\redbeta\termtwo\{\termthree/x\}$.
	\end{varenumerate}
\end{lemma}
\begin{proof}
	\mbox{}
	\begin{varenumerate}
		\item We proceed by induction on the structure of $\termthree$.
		\begin{varitemize}
			\item $\termthree=x$. Then $\termthree\{\termone/x\}=\termone\redbetasteps{1}\termtwo=\termthree\{\termtwo/x\}$.
			\item $\termthree=y$ and $y\neq x$. Then $\termthree\{\termone/x\}=y\redbetasteps{0}y=\termthree\{\termtwo/x\}$.
			\item $\termthree=\lambda y.\termfour$. Then $\termthree\{\termone/x\}=\lambda y.\termfour\{\termone/x\}$. By induction hypothesis $\termfour\{\termone/x\}\redbetasteps{\leq 1}\termfour\{\termtwo/x\}$ and thus $\termthree\{\termone/x\}=\lambda y.\termfour\{\termone/x\}\redbetasteps{\leq 1}\lambda y.\termfour\{\termtwo/x\}=\termthree\{\termtwo/x\}$.
			\item $\termthree=\termfour\termfive$. Then $\termthree\{\termone/x\}=\termfour\{\termone/x\}\termfive\{\termone/x\}$. If $x$ has no free occurrences in $\termthree$, then $\termthree\{\termone/x\}\redbetasteps{0}\termthree\{\termtwo/x\}$. Otherwise suppose the \emph{only} occurrence of $x$ is free in $\termfour$. Then $\termfive\{\termone/x\}=\termfive\{\termtwo/x\}$ and by induction hypothesis $\termfour\{\termone/x\}\redbetasteps{\leq 1}\termfour\{\termtwo/x\}$. Thus $\termthree\{\termone/x\}=\termfour\{\termone/x\}\termfive\{\termone/x\}\redbetasteps{\leq 1}\termfour\{\termtwo/x\}\termfive\{\termtwo/x\}=\termthree\{\termtwo/x\}$. The case in which $x$ is free in $\termfive$ is equivalent.
		\end{varitemize}
		\item The proof is equivalent to the one above in the variable and abstraction cases. One case is missing.
		\begin{varitemize}
			\item $\termthree=\termfour\termfive$. Then $\termthree\{\termone/x\}=\termfour\{\termone/x\}\termfive\{\termone/x\}$. If $x$ is free only in $\termfour$ or $\termfive$ the proof is the same of the above one. Otherwise let us suppose that $x$ occurs free both in $\termfour$ and in $\termfive$. Then we can apply twice the induction hypothesis which yields $\termfour\{\termone/x\}\redbetasteps{\geq 1}\termfour\{\termtwo/x\}$ and $\termfive\{\termone/x\}\redbetasteps{\geq 1}\termfive\{\termtwo/x\}$. Thus $\termthree\{\termone/x\}=\termfour\{\termone/x\}\termfive\{\termone/x\}\redbetasteps{\geq 1}\termfour\{\termtwo/x\}\termfive\{\termone/x\}\redbetasteps{\geq 1}\termfour\{\termtwo/x\}\termfive\{\termtwo/x\}=\termthree\{\termtwo/x\}$.
		\end{varitemize}
		\item The result directly follows by the previous two.
		\item We proceed by induction on the structure of $\termone$.
		\begin{varitemize}
			\item $\termone=x$. This case cannot occur because $\termone$ is a reducible term.
			\item $\termone=\lambda y.\termfour$. Then $\termone\{\termthree/x\}=\lambda y.\termfour\{\termthree/x\}$. Since $\termone\redbeta\termtwo$, then $\termtwo=\lambda y.\termfive$ and $\termfour\redbeta\termfive$. Thus by induction hypothesis $\termfour\{\termthree/x\}\redbeta\termfive\{\termthree/x\}$. Then $\termone\{\termthree/x\}=\lambda y.\termfour\{\termthree/x\}\redbeta\lambda y.\termfive\{\termthree/x\}=\termtwo\{\termthree/x\}$.
			\item $\termone=\termfour\termfive$ and the redex fired in $\termone$ to get $\termtwo$ is inside $\termfour$. Then $\termone\{\termthree/x\}=\termfour\{\termthree/x\}\termfive\{\termthree/x\}$, $\termtwo=\termsix\termfive$ and $\termfour\redbeta\termsix$. By induction hypothesis $\termfour\{\termthree/x\}\redbeta\termsix\{\termthree/x\}$. Thus $\termone\{\termthree/x\}=\termfour\{\termthree/x\}\termfive\{\termthree/x\}\redbeta\termsix\{\termthree/x\}\termfive\{\termthree/x\}=\termtwo\{\termthree/x\}$.
			\item $\termone=(\lambda y.\termfour)\termfive$ and $\termtwo=\termfour\{\termfive/y\}$. Then $\termone\{\termthree/x\}=(\lambda y.\termfour\{\termthree/x\})\termfive\{\termthree/x\}\redbeta\termfour\{\termthree/x\}\{\termfive\{\termthree/x\}/y\}$.  By Substitution Lemma this last term is equivalent to $\termfour\{\termfive/y\}\{\termthree/x\}=\termtwo\{\termthree/x\}$. Thus $\termone\{\termthree/x\}\redbeta\termtwo\{\termthree/x\}$.\qed
		\end{varitemize}
	\end{varenumerate}
\end{proof}
Next, we need to give some quantitative refinements of the usual
confluence result, which are however possible only in
\emph{fragments} of the $\lambda$-calculus rather than in the $\lambda$-calculus
itself. Let us first consider the case of the $\lambda A$-calculus. We can observe in this
case that the leftmost-outermost strategy has some peculiarity over the
other ones, in that every peak $\termtwo\coredlo\termone\redbeta\termthree$
can be \emph{closed} by a valley in which the path from $\termtwo$ \emph{is not
  longer} than the one from $\termthree$, hinting at the efficiency of
$\pslo$.
\begin{lemma}\label{lemma:diamond}
	Let $\termone$ be a $\lambda A$-term. If $\termone\redlo\termtwo$ and $\termone\redbeta\termthree$, then there exists a term $\termfour$ such that $\termtwo\redbetasteps{n}\termfour$, $\termthree\redlosteps{m}\termfour$ and $n\leq m\leq 1$.
\end{lemma}
\begin{proof}
	Let us call $\rdxone=(\lambda x.\termfive)\termsix$ the $\pslo$-redex of 
	$\termone$ and $\rdxtwo$ the redex fired in the reduction from $\termone$ 
	to $\termthree$. Let $\termone=\contone[\rdxone]$. We proceed by induction 
	on the structure of $\contone$.
	\begin{varitemize}
		\item $\contone=\Box$. Then $\termone=\rdxone$ and $\termtwo=\termfive\{\termsix/x\}$. We distinguish some cases.
		\begin{varitemize}
			\item $\rdxone$ is $\rdxtwo$. Then $\termtwo=\termthree=\termfour$.
			\item $\rdxtwo$ is inside $\termfive$. Call $\termfive'$ the contractum of $\termfive$ obtained firing redex $\rdxtwo$. Then $\termthree=(\lambda x.\termfive')\termsix\redlo\termfive'\{\termsix/x\}$. By Lemma~\ref{lemma:commutation}.4, $\termtwo=\termfive\{\termsix/x\}\redbeta\termfive'\{\termsix/x\}$.
			\item $\rdxtwo$ is inside $\termsix$. Call $\termsix'$ the contractum of $\termsix$ obtained firing redex $\rdxtwo$. Then $\termthree=(\lambda x.\termfive)\termsix'\redlo\termfive\{\termsix'/x\}$. By Lemma~\ref{lemma:commutation}.1, $\termtwo=\termfive\{\termsix/x\}\redbetasteps{\leq 1}\termfive\{\termsix'/x\}$.
		\end{varitemize}
		\item $\contone=\termseven\conttwo$. $\rdxtwo$ cannot be inside $\termseven$, otherwise $\rdxone$ would not be the $\pslo$-redex. Thus the result follows by induction hypothesis.
		\item $\contone=\conttwo\termseven$. If $\rdxtwo$ is inside 
		$\conttwo[\rdxone]$, then the result follows by induction hypothesis. 
		Otherwise $\rdxone$ and $\rdxtwo$ are independent. Then 
		$\termone=\conttwo[\rdxone]\contthree[\rdxtwo]$. Moreover, 
		$\termtwo=\conttwo[\rdxone']\contthree[\rdxtwo]$ and 
		$\termthree=\conttwo[\rdxone]\contthree[\rdxtwo']$, where $\rdxone'$ 
		and $\rdxtwo'$ are the contracta of $\rdxone$ and $\rdxtwo$, 
		respectively. Thus, 
		$\termtwo=\conttwo[\rdxone']\contthree[\rdxtwo]\redbetasteps{1}\conttwo[\rdxone']\contthree[\rdxtwo']=\termfour$
		 and 
		$\termthree=\conttwo[\rdxone]\contthree[\rdxtwo']\redlosteps{1}\conttwo[\rdxone']\contthree[\rdxtwo']=\termfour$.
		\item $\contone=\lambda x.\conttwo$. The result follows by induction hypothesis.\qed
	\end{varitemize} 
\end{proof}
The situation in $\lambda I$ is dual: peaks can be closed (possibly) faster
by \emph{not} going leftmost outermost.
\begin{lemma}\label{lemma:diamond2}
	Let $\termone$ be a $\lambda I$-term. If $\termone\redlo\termtwo$ and $\termone\redbeta\termthree$, then there exists a term $\termfour$ such that $\termtwo\redbetasteps{n}\termfour$, $\termthree\redlosteps{m}\termfour$ and $n\geq m$.
\end{lemma}
\begin{proof}
	The proof is equivalent to the one of Lemma~\ref{lemma:diamond}, using Lemma~\ref{lemma:commutation}.2 instead of Lemma~\ref{lemma:commutation}.1.\qed
\end{proof}
Thirdly, we can consider the case of the full $\lambda$-calculus. Here, the only
thing which can be said is that peaks can be solved in such a way as to guarantee
that the valley has at most unitary length \emph{on the $\pslo$ side}.
\begin{lemma}\label{lemma:diamond3}
	Let $\termone$ be a $\lambda$-term. If $\termone\redlo\termtwo$ and 
	$\termone\redbeta\termthree$, then there exists a term $\termfour$ such 
	that $\termtwo\redbetared{}\termfour$, $\termthree\redlosteps{m}\termfour$ 
	and $m\leq 1$. In particular $m=0$ if and only if $\termtwo=\termthree$.
\end{lemma}
\begin{proof}
	The proof is equivalent to the one of Lemma~\ref{lemma:diamond}, using Lemma~\ref{lemma:commutation}.3 instead of Lemma~\ref{lemma:commutation}.1.\qed
\end{proof}
Finally, we can prove that $\nsteps{\pslo}(\cdot)$, seen as a function on $\lambda$-terms,
can never increase along \emph{any} $\beta$-reduction.
\begin{lemma}\label{lemma:derivationlength}
	If $\termone\redbeta\termtwo$, then $\nsteps{\pslo}(\termtwo) \leq \nsteps{\pslo}(\termone)$.
\end{lemma}
\begin{proof}
	We argue by induction on $\nsteps{\pslo}(\termone)$. We call $\rdxone$ the 
	redex contracted from $\termone$ to $\termtwo$, $\termthree$ the 
	$\pslo$-reduct of $\termone$ and $\rdxtwo$ the $\pslo$-redex of $\termone$. 
	The base of the induction is $\nsteps{\pslo}(\termone)=1$ and it is proved 
	as follows. By Lemma~\ref{lemma:diamond3}, there exists a term $\termfour$ 
	such that $\termthree\redbetared{}\termfour$, 
	$\termtwo\redlosteps{m}\termfour$ and $m\leq 1$. Since 
	$\nsteps{\pslo}(\termone)=1$, $\termthree$ is in normal form and thus 
	coincides with $\termfour$. Thus $\nsteps{\pslo}(\termtwo)\leq 1$. Now 
	suppose this Lemma true for each term $\termfour$ such that 
	$\nsteps{\pslo}(\termfour)\leq k$. Let us consider a term $\termone$ with 
	$\nsteps{\pslo}(\termone)=k+1<\infty$. Call $\termthree$ the $\pslo$-reduct 
	of 
	$\termone$. By Lemma~\ref{lemma:diamond3}, there exists a term $\termfour$ 
	such that $\termtwo\redlosteps{m}\termfour$ and 
	$\termthree\redbetared{}\termfour$. The case in which $\termtwo$ and 
	$\termthree$ coincide and thus $m=0$ is trivial, so let us consider 
	distinct $\termtwo$ and $\termthree$, which implies $m=1$. 
	$\nsteps{\pslo}(\termthree)=k$. Thus, by induction hypothesis, 
	$\nsteps{\pslo}(\termfour)\leq k$. 
	$\nsteps{\pslo}(\termtwo)=1+\nsteps{\pslo}(\termfour)\leq 
	1+k=\nsteps{\pslo}(\termone)$. If $\nsteps{\pslo}(\termone)=\infty$, then 
	$\nsteps{\pslo}(\termtwo) \leq \nsteps{\pslo}(\termone)$ since 
	$\nsteps{\pslo}(\termtwo)\leq\infty$. \qed
\end{proof}
\subsection{Minimal and Maximal Strategies in sub-$\lambda$-calculi}
Since we exploit ARSs as models of computation, we are interested in defining 
quantitative properties on them. If one is interested in efficiency, it makes a 
lot of sense, of course, to implement an ARS with a strategy that minimises the 
number of steps to normal form. On the other hand, one could be interested in a 
worst-case analysis, thus seeking a maximal strategy.
\begin{definition}[\cite{vanoostrom_2007}]
	Let $(A,\red)$ be an ARS and $\red_\gamma$ a strategy for it.
	\begin{varitemize}
		\item  $\red_\gamma$ is \emph{normalising}, if every weakly normalising object only allows finite maximal reduction sequences under $\red_\gamma$.
		\item $\red_\gamma$ is \emph{minimal}, if the length of any reduction sequence under $\red_\gamma$ from any object to normal form is minimal among the lengths of all the reduction sequences from the former to the latter.
		\item $\red_\gamma$ is \emph{perpetual}, if every object which is not strongly normalising only allows infinite maximal reduction sequences under $\red_\gamma$.
		\item $\red_\gamma$ is \emph{maximal}, if the length of any reduction sequence under $\red_\gamma$ from any object to normal form is maximal among the lengths of all the reduction sequences from the former to the latter.
	\end{varitemize}
\end{definition}
It is well-known that the minimal and normalising strategy is \emph{not} computable in the scope of the full $\lambda$-calculus \cite[Section~13.5]{barendregt_lambda_1984}, i.e. one could not select the redex leading to the minimal reduction sequence to normal form in an effective way. Instead, there exists an effective maximal and perpetual strategy \cite{vanraamsdonk_1999}. In this Section we prove that in $\lambda A$ (respectively, in $\lambda I$) the leftmost-outermost strategy is minimal and normalising (respectively, maximal and perpetual). In order to prove this, we exploit a result due to van Oostrom~\cite{vanoostrom_2007}.
\begin{definition}
	Let $(A,\red)$ be an ARS and $\red_\alpha,\red_\gamma$ two reduction strategies for it. Suppose that for every $a\in A$, if $b\leftarrow_\alpha a\red_\gamma c$, then either there is an infinite reduction sequence from $c$ under $\red_\alpha$ or there is $d\in A$ such that $b\red_\gamma^n d\leftarrow_\alpha^m c$ and $n\leq m$. We say that $\red_\alpha$ \emph{ordered locally commutes with $\red_\gamma$}, abbreviated $\mathbf{OLCOM}(\red_\alpha,\red_\gamma)$.
\end{definition}
\begin{theorem}[\cite{vanoostrom_2007}]\label{theorem:vanoostrom}
	Let $(A,\red)$ be an ARS and $\red_\alpha$ a reduction strategy for it.
	\begin{varenumerate}
		\item If $\textbf{OLCOM}(\red_\alpha,\red)$, then $\red_\alpha$ is minimal and normalising.
		\item If $\textbf{OLCOM}(\red,\red_\alpha)$, then $\red_\alpha$ is maximal and perpetual.
	\end{varenumerate}
\end{theorem}

This way, we have a proof method that allow us to state and prove in a simple way that the $\pslo$ strategy is minimal and normalising (respectively, maximal and perpetual) in $\lambda A$ (respectively, in $\lambda I$).

\begin{theorem}\label{theorem:optpes}
	In $\lambda A$ $\pslo$ is a minimal and normalising strategy, in $\lambda I$ it is a maximal and perpetual one.
\end{theorem}
\begin{proof}
	In $\lambda A$, by Lemma~\ref{lemma:diamond}, $\textbf{OLCOM}(\mathsf{\pslo},\redbeta)$ holds and in $\lambda I$, by Lemma~\ref{lemma:diamond2}, $\textbf{OLCOM}(\redbeta,\mathsf{\pslo})$ holds. Then the result follows by Theorem~\ref{theorem:vanoostrom}.\qed
\end{proof}
One might guess that a dual result holds, i.e. the rightmost-innermost strategy is minimal for $\lambda I$ and maximal for $\lambda A$. However this is not the case, as witnessed by the two following counterexamples (Figure~\ref{figure:counterexample}). The former from \cite{asperti_1998}, the latter provided by Damiano Mazza in a personal communication.
\begin{varitemize}
	\item In $\lambda I$, $\psri$ is not minimal: $\termone_I=(\lambda x.x\bm{I})(\lambda x.(\lambda z.zz)(xy))$ reduces rightmost to normal form in five steps, while there exists a reduction sequence to normal form of four steps.
	\item In $\lambda A$, $\psri$ is not maximal: $\termone_A=(\lambda x.x\bm{I})(\lambda x.(\lambda z.y)(xy))$ reduces rightmost to normal form in three steps, while there exists a reduction sequence to normal form of four steps.
\end{varitemize}
\begin{figure}
	\fbox{
		\begin{minipage}{.96\textwidth}
			\centering
			\subfloat[]{{
				\begin{tikzpicture}
				[node distance=20mm, auto, transform shape,scale=0.8]
				\node (m) at (0,0) {$\termone_I=(\lambda x.x\bm{I})(\lambda x.(\lambda z.zz)(xy))$};
				\node (n) at (-2,-1) {$(\lambda x.(\lambda z.zz)(xy))\bm{I}$};
				\node (l) at (-2,-2) {$(\lambda z.zz)(\bm{I}y)$};
				\node (q) at (-2,-3) {$(\lambda z.zz)y$};
				\node (w) at (2,-1) {$(\lambda x.x\bm{I})(\lambda x.(xy)(xy))$};
				\node (e) at (2,-2) {$(\lambda x.(xy)(xy))\bm{I}$};
				\node (r) at (2,-3) {$(\bm{I}y)(\bm{I}y)$};
				\node (t) at (2,-4) {$(\bm{I}y)y$};
				\node (u) at (0,-5) {$yy$};
				\draw (m) edge[->] node[left=5pt] {} (n);
				\draw (n) edge[->] node[left=5pt] {} (l);
				\draw (l) edge[->] node[left=5pt] {} (q);
				\draw (q) edge[->] node[left=5pt] {} (u);
				\draw (m) edge[->] node[left=5pt] {} (w);
				\draw (w) edge[->] node[left=5pt] {} (e);
				\draw (e) edge[->] node[left=5pt] {} (r);
				\draw (r) edge[->] node[left=5pt] {} (t);
				\draw (t) edge[->] node[right=5pt] {} (u);
				\end{tikzpicture}}}
			\qquad
			\subfloat[]{{
					\begin{tikzpicture}
					[node distance=20mm, auto, transform shape,scale=0.8]
					\node (p) at (0,1) {$\termone_A=(\lambda x.x\bm{I})(\lambda x.(\lambda z.y)(xy))$};
					\node (m) at (-1.5,0) {$(\lambda x.(\lambda z.y)(xy))\bm{I}$};
					\node (n) at (-1.5,-1) {$(\lambda z.y)(\bm{I}y)$};
					\node (l) at (-1.5,-2) {$(\lambda z.y)y$};
					\node (w) at (1.5,-1) {$(\lambda x.y)\bm{I}$};
					\node (y) at (1.5,0) {$(\lambda x.x\bm{I})(\lambda x.y)$};
					\node (u) at (0,-3) {$y$};
					\draw (m) edge[->] node[left=5pt] {} (n);
					\draw (p) edge[->] node[left=5pt] {} (m);
					\draw (n) edge[->] node[left=5pt] {} (l);
					\draw (l) edge[->] node[left=5pt] {} (u);
					\draw (p) edge[->] node[left=5pt] {} (y);
					\draw (y) edge[->] node[left=5pt] {} (w);
					\draw (w) edge[->] node[left=5pt] {} (u);
					\end{tikzpicture}}}
		\end{minipage}}
		\caption{The right-hand side of each diagram represents the reduction sequence under rightmost-innermost strategy. On the left-hand side a shorter (a) and longer (b) reduction sequences are provided.}
		\label{figure:counterexample}
	\end{figure}
Please observe that the two counterexamples are built behind very simple principles. In both of them there is the \emph{virtual} redex $xy$ that is copied (respectively, cancelled) before being contracted, if one reduces rightmost-innermost.
\section{Probabilistic Abstract Reduction Systems as Strategies}\label{section:prob}
We introduce now a framework suitable to define randomised
strategies. In particular, we shift the notion of ARS to the fully
probabilistic case, where nondeterminism is solved through probabilistic choice, obtaining this way a model akin to Markov chains. The first preliminary concept we need is that of a distribution.
\begin{definition}[Distribution]
  A \emph{partial probability distribution} over a countable set $A$
  is a mapping $\rho:A\rightarrow\left[0,1\right]$ such that
  $\left|\rho\right|\leq 1$ where
  $\left|\rho\right|=\underset{a\in A}{\sum}\rho\left(a\right)$. We denote the set of partial probability distributions over $A$
  by $\pdist{A}$. The \emph{support} of a partial distribution
  $\rho\in\pdist{A}$ is the set $\supp{\rho}=\left\lbrace a \in A \, |
  \, \rho\left( a\right) >0\right\rbrace $.  A \emph{probability
    distribution} over a countable set $A$ is a partial probability
  distribution $\mu$ such that $\left|\mu\right|=1$. $\dist{A}$ denotes the set
  of probability distributions over $A$.
\end{definition}
Strategies as from Definition~\ref{def:determstrat} are
deterministic: the process of picking a reduct among the many possible
ones can only have \emph{one} outcome. But what if this process becomes
\emph{probabilistic}? This is captured by the following notions.
\begin{definition}[Randomised Strategies]
  Given an ARS $(A,\rightarrow)$, a
  \emph{randomised reduction strategy $\mathsf{P}$ for $(A,\rightarrow)$}
  is a partial function $\mathsf{P}:A\rightharpoonup\dist{A}$ such that $\mathsf{P}(a)$ is undefined if and only if $a\in A$ is in normal form. Moreover, if $\mathsf{P}(a)=\mu$, then
  $\supp{\mu}\subseteq\{b\,|\,a\rightarrow b\}$.
\end{definition}

Intuitively, the last constraint ensures that one can pass with positive 
probability from an object $a$ to an object $b$ only if $a\red b$.

\begin{definition}[FPARS]
	If $(A,\rightarrow)$ is an ARS and 
	$\mathsf{P}$ is a randomised reduction strategy for it,
	then we call $(A,\mathsf{P})$ a \emph{fully} probabilistic abstract
	reduction system (FPARS).
\end{definition}
Our model is said to be \emph{fully} probabilistic in that nondeterminism is 
not taken into account. Indeed PARSs as defined in 
\cite{bournez_proving_2005,avanzini_probabilistic_2018} combine a 
nondeterministic behaviour with a randomised one. FPARSs dynamics instead is 
purely probabilistic. In the following, we will
study randomised strategies seen as FPARSs.

The dynamics of an FPARS can be handled by way of an appropriate
notion of a configuration, on which an evolution function can be
defined.
\begin{definition}[Configurations, Computations]
Let $\left( A,\mathsf{P}\right)$ be an FPARS and $a,b\in A$ be two
\emph{states}. We define the \emph{probability $\mathbb{P}\left(a
\rightarrow b\right)$ of a transition} from $a$ to $b$:
$$
\mathbb{P}\left(a \rightarrow b\right)=
\begin{cases}
\mu\left(b\right)&\mbox{if $\mathsf{P}\left(a\right)=\mu$},\\
0&\mbox{if $\mathsf{P}\left(a\right)$ is undefined}.
\end{cases}
$$
A \emph{configuration} of an FPARS $\left( A,\mathsf{P}\right)$ is a partial probability distribution $\rho\in\pdist{A}$.
The \emph{evolution} of an FPARS $\left( A,\mathsf{P}\right)$ is the function 
$\mathsf{E}:\pdist{A}{\rightarrow\pdist{A}}$ defined as follows:
$$
\mathsf{E}\left( \rho\right) = \sigma \textnormal{ where } \sigma\left(a\right)=\underset{b\in A}{\sum}\rho\left( b\right) \cdot\mathbb{P}\left(b \rightarrow a\right)\;\textnormal{for every }a \in A.
$$
If  $\mathsf{E}\left( \rho\right) = \sigma$ we write $\rho\rightsquigarrow\sigma$.
A \emph{computation} is any sequence
$(\rho_i)_{i\in\mathbb{N}}$, such that $\rho_i\rightsquigarrow\rho_{i+1}$.
\end{definition}
\begin{remark}
  Those computations $(\rho_i)_{i\in\mathbb{N}}$ where
  $\rho_0$ is Dirac (i.e. there exists $a\in A$ such that
  $\rho_0 (a)=1$) are particularly interesting: they model
  the evolution of an FPARS starting from a state $a \in A$. We write in this 
  case
  $\rho_0=\mathbf{Dirac}(a)$. In the following, we will always consider 
  computations of this type.
\end{remark}

\begin{remark}
	Please note that if $\rho\rightsquigarrow\sigma$, then $|\sigma|\leq|\rho|$. As proposed in \cite{avanzini_probabilistic_2018}, the total probability mass decreases along a reduction sequence, becoming $0$ when the computation has terminated. In particular, if $\rho=\mathbf{Dirac}(a)$ and $a$ is in normal form, then $\mathsf{E}(\rho)=\sigma$, where $|\sigma|=0$.
\end{remark}

\begin{example}\label{example:pars}
	Consider an ARS $(A,\rightarrow)$, where $A=\{a,b\}$ and 
	$\rightarrow\,=\{(a,a),(a,b)\}$. We define a randomised strategy 
	$\mathsf{P}$ on top of $(A,\rightarrow)$. $\mathsf{P}(a)=\mu$ where 
	$\mu(a)=\mu(b)=\frac{1}{2}$, while $\mathsf{P}(b)$ is undefined (since $b$ 
	is in normal form). A computation $(\rho_i)_{i\in\mathbb{N}}$ starting from 
	$\rho_0=\mathbf{Dirac}(a)$ has the following form.
	$$
	\underset{\rho_{0}}{\begin{cases}
		1 & a\\
		0 & b
		\end{cases}}\rightsquigarrow\underset{\rho_{1}}{\begin{cases}
		\frac{1}{2} & a\\
		\frac{1}{2} & b
		\end{cases}}\rightsquigarrow\underset{\rho_{2}}{\begin{cases}
		\frac{1}{4} & a\\
		\frac{1}{4} & b
		\end{cases}}\rightsquigarrow\cdots\rightsquigarrow\underset{\rho_{k}}{\begin{cases}
		\frac{1}{2^{k}} & a\\
		\frac{1}{2^{k}} & b
		\end{cases}}\rightsquigarrow\cdots
	$$
\end{example}
How could we measure the \emph{length} of a computation? It is natural
to look for a definition capturing the \emph{average} derivation length from
$s$ to its normal form. 
\begin{definition}
  Let $(A,\mathsf{P})$ be an FPARS and $\rho_0=\mathbf{Dirac}(a)$,
  where $a\in A$. Given the computation $(\rho_i)_{i\in\mathbb{N}}$,
  $\nsteps{\mathsf{P}}(a)=\sum\limits_{i=1}^{\infty} \left|\rho_i\right|$.
\end{definition}
The definition above collapses to the one given for the
deterministic case when $\mathsf{P}$ is deterministic. Moreover, observe that 
$\nsteps{\mathsf{P}}:A\rightarrow\mathbb{R}\cup\{+\infty\}$, i.e. the average 
length of a computation could be infinite. A witness of such behaviour is the 
FPARS $(A,\mathsf{P})$ where $A=a$ and $\mathsf{P}(a)=\mu$ such that 
$\mu(a)=1$. In the next Section we provide a justification for this definition 
considering the length of a computation as an appropriately defined random 
variable. Besides quantitative information on the length, one might just be 
interested in knowing if a reduction is finite or infinite. Indeed, termination 
is a crucial problem in rewriting theory. Since we are in a probabilistic 
context, distinct such notions are
possible. We define in our setting two classical termination
properties, namely almost-sure termination and positive almost-sure termination \cite{bournez_proving_2005,ferrer_fioriti_probabilistic_2015,avanzini_probabilistic_2018}.
\begin{definition}
	An FPARS is \emph{almost-surely terminating} (AST) if each computation $(\rho_i)_{i\in\mathbb{N}}$, is such that $\underset{n\rightarrow +\infty}{\lim}\left|\rho_{n}\right|=0$.
\end{definition}
\begin{definition}
	An FPARS $(A,\mathsf{P})$ is \emph{positive almost-surely terminating} (PAST) if for each $a \in A$,
	$\nsteps{\mathsf{P}}(a)< +\infty$.
	In this case we say that $\mathsf{P}$ is a \emph{positive almost-surely normalising} strategy.
\end{definition}
\begin{example}
	Consider the same setting of Example \ref{example:pars}. It is easy to see that $(A,\mathsf{P})$ is AST since $\underset{n\rightarrow +\infty}{\lim}\left|\rho_{n}\right|=\underset{n\rightarrow +\infty}{\lim}\frac{1}{2^{n-1}}=0$. Moreover $(A,\mathsf{P})$ is PAST since $\nsteps{\mathsf{P}}(a)=\sum\limits_{n=1}^{\infty} \left|\rho_n\right|=\sum\limits_{n=1}^{\infty}\frac{1}{2^{n-1}}=2$ and $\nsteps{\mathsf{P}}(b)=0$.
\end{example}
While PAST clearly implies AST, it is well-known from Markov chain literature that AST does not imply PAST e.g. in the symmetric random walk on $\mathbb{Z}$ \cite[Chapter 1.7]{norris_markov_1998}. We prove in our framework a result analogous to a classical Theorem in Markov chain theory due to Foster \cite{foster1953} that gives a sufficient condition for PAST and a bound on the average number of steps to normal form.
\begin{notation}
	For $\varepsilon>0$ we write $x>_{\varepsilon}y$ if and only if $x\geq y+\varepsilon$. This order is well-founded on real numbers with a lower bound. Please note that if $\varepsilon=0$, then the order is not well founded on the reals with a lower bound. For example, one could have an infinite descending chain of strictly positive reals like $1>\frac{1}{2}>\frac{1}{4}>\cdots>\frac{1}{2^k}>\cdots$.
\end{notation}
\begin{definition}
	Given an FPARS $\left( A,\mathsf{P}\right)$, we define a function $V:A\rightarrow\mathbb{R}$ as Lyapunov \cite{bremaud_markov_1999} if the following are satisfied.
	\begin{varenumerate}
		\item
		There exists $k\in\mathbb{R}$ such that $V\left(a\right)\geq k$ for each $a\in A$.
		\item
		There exists $\varepsilon>0$ such that for every $a\in A$ if $\mathsf{P}\left(a\right) = \mu$, then $V\left(a\right) >_{\varepsilon} V\left(\mu\right)$, where $V$ is extended to partial distributions as follows:
		$$
		V\left(\mu\right)=\sum_{b\in A}V\left(b\right)\cdot\mu\left(b\right).
		$$
	\end{varenumerate}
\end{definition}
\begin{remark}
	Without loss of generality, given a Lyapunov function $V$ we can always consider a new Lyapunov function $W(a)\geq 0$ for each $a\in A$ simply adding a constant to $V$. Thus in the following we will always assume Lyapunov functions to be non-negative.
\end{remark}
\begin{theorem}
	If we can define for an FPARS $\mathcal{P}=\left(A,\mathsf{P}\right)$ a Lyapunov function $V$, then $\mathcal{P}$ is PAST and the average derivation length $\nsteps{\mathsf{P}}(a)$ of any sequence $(\rho_i)_{i\in\mathbb{N}}$ starting from any $a\in A$ is bounded by
	$\frac{V\left(a\right)}{\varepsilon}$.
\end{theorem}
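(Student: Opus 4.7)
\emph{Proof plan.} The plan is to show that the linear extension of $V$ to configurations behaves as a strict supermartingale, losing at least $\varepsilon$ units of $V$-mass per unit of surviving probability mass, and then to read off PAST and the quantitative bound via the identity $\avglenght{P}=\sum_{i\geq 1}|\rho_i|$, which itself follows from the remark $|\rho_k|=\mathbb{P}\{T_\mathsf{trm}\geq k\}$ and Lemma~\ref{lemma:alt-def}.

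First I would invoke the remark to assume $V\geq 0$, and extend $V$ to partial distributions linearly by $V(\rho)=\sum_{s\in S}V(s)\rho(s)$. The central computation is to evaluate $V(\rho_{i+1})$ using the definition of $\mathsf{E}$: exchanging the two summations yields
\[
V(\rho_{i+1}) \;=\; \sum_{s\in S}V(s)\sum_{t\in S}\rho_i(t)\,\mathbb{P}(t\rightarrow s) \;=\; \sum_{t\notin\mathbf{NF}(S)}\rho_i(t)\,V(\mathsf{P}(t)),
\]
where normal forms vanish since $\mathbb{P}(t\rightarrow s)=0$ whenever $\mathsf{P}(t)=\bot$. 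Now apply the Lyapunov inequality $V(\mathsf{P}(t))\leq V(t)-\varepsilon$ termwise, together with the identity $|\rho_{i+1}|=\sum_{t\notin\mathbf{NF}(S)}\rho_i(t)$ (mass on normal forms is not propagated by $\mathsf{E}$), and discard the non-negative contribution $\sum_{t\in\mathbf{NF}(S)}\rho_i(t)V(t)$ to $V(\rho_i)$. This produces the key one-step inequality
\[
V(\rho_{i+1}) \;\leq\; V(\rho_i) - \varepsilon\,|\rho_{i+1}|.
\]

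From this point the argument is routine: a telescoping sum from $i=0$ to $i=n-1$, combined with $V(\rho_n)\geq 0$ and $\rho_0$ being Dirac on $s_0$, yields $\varepsilon\sum_{j=1}^{n}|\rho_j|\leq V(s_0)$ for every $n$. Monotone convergence extends this to $\sum_{j=1}^{\infty}|\rho_j|\leq V(s_0)/\varepsilon$, and the identification of this series with $\avglenght{P}$ via the remark and Lemma~\ref{lemma:alt-def} gives both PAST and the quantitative bound $\avglenght{P}\leq V(s_0)/\varepsilon$.

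The main obstacle is the one-step inequality: one must cleanly separate mass sitting on normal forms (which drops out of $\mathsf{E}$ but still appears inside $V(\rho_i)$) from active mass (whose $V$-value is strictly decreased by $\varepsilon$). The reduction to $V\geq 0$ is precisely what allows the normal-form contribution to $V(\rho_i)$ to be discarded for free; without it one would need to carry an extra bookkeeping term tracking the $V$-mass already absorbed at termination, and the elegant telescoping above would break down.
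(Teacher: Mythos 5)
Your proposal is correct and follows essentially the same route as the paper's own proof: the identical one-step supermartingale inequality $V(\rho_{i+1})\leq V(\rho_i)-\varepsilon\,|\rho_{i+1}|$ obtained by exchanging summations in the definition of $\mathsf{E}$ and dropping the normal-form mass (using $V\geq 0$), followed by telescoping and the identification $\avglenght{P}=\sum_{i\geq 1}|\rho_i|$. No gaps.
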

	\begin{proof}
		Let us consider a generic transition $\rho_{i-1}\rightsquigarrow\rho_i$ of $\mathcal{P}$.
		\[
		0\leq V(\rho_i)=\sum_{k\in A}V(k)\rho_i(k)=\sum_{k\in A}V(k)\sum_{j\in A}\rho_{i-1}(j)\cdot\mathbb{P}(j\rightarrow k)
		\]
		If $j\in\mathbf{NF}(A)$, then $\mathbb{P}(j\rightarrow k)=0$, otherwise 
		$\mathbb{P}(j\rightarrow k)=\mu_j(k)$, where $\mu_j=\mathsf{P}(j)$. 
		Thus, resuming the chain of inequalities,
		\begin{equation*}
		\begin{split}
		0&\leq\sum_{k\in A}V(k)\sum_{j\in A\setminus 
		\mathbf{NF}(A)}\rho_{i-1}(j)\cdot\mu_j(k)=\sum_{j\in A\setminus 
		\mathbf{NF}(A)}\rho_{i-1}(j)\sum_{k\in A}V(k)\mu_j(k)\\
		&=\sum_{j\in A\setminus \mathbf{NF}(A)}\rho_{i-1}(j)\cdot V(\mu_j)\leq 
		\sum_{j\in A\setminus \mathbf{NF}(A)}\rho_{i-1}(j)\cdot 
		(V(j)-\varepsilon)\\
		&\leq \sum_{j\in A}\rho_{i-1}(j)\cdot V(j) - \sum_{j\in A\setminus \mathbf{NF}(A)}\rho_{i-1}(j)\cdot \varepsilon=V(\rho_{i-1})-\varepsilon\cdot|\rho_i|.
		\end{split}
		\end{equation*}
		Iterating the above inequality we have
		\begin{equation*}
		\begin{split}
		0&\leq V(\rho_i)\leq V(\rho_{i-1})-\varepsilon\cdot|\rho_i|\\
		&\leq V(\rho_{i-2})-\varepsilon\cdot(|\rho_i| + |\rho_{i-1}|)\leq \cdots\leq V(\rho_0)-\varepsilon\sum_{n=1}^i |\rho_n|.
		\end{split}
		\end{equation*}
		Taking the limit for $i\rightarrow +\infty$ this yields to $0\leq V(\rho_0)-\varepsilon\cdot\nsteps{\mathsf{P}}(a)$ and thus
		$$
		\nsteps{\mathsf{P}}(a)\leq \frac{V(\rho_0)}{\varepsilon}=\frac{V(a)}{\varepsilon}.
		$$
		\qed
	\end{proof}
\subsection{FPARSs and Markov Chains}
All the machinery introduced up to now avoids the use of formal probability 
theory, namely probability spaces, measures and random variables. This was done 
on purpose, in order to provide a simple mathematical framework in which our 
theory could be settled. However, it seems appropriate to relate our 
definitions to those which appear in the literature about stochastic processes 
\cite{ash_probability_1999,bremaud_markov_1999}. We first recall some basic 
notions of probability theory and Markov chain theory. Please refer to the 
references for a more detailed discussion on the subject.
\begin{definition}
	These definitions are standard in probability theory and are adapted from \cite{ash_probability_1999,ferrer_fioriti_probabilistic_2015}.
	\begin{varitemize}
		\item A \emph{probability space} is a triple $(\Omega, \mathcal{F},\mathbb{P})$ where $\Omega$ is the \emph{sample space}, $\mathcal{F}\subseteq2^\Omega$ is a $\sigma$-algebra on $\Omega$ whose elements are said to be \emph{events}, and $\mathbb{P}$ is a \emph{probability measure} for $\mathcal{F}$ i.e. $\mathbb{P} :\mathcal{F}\rightarrow[0,1]$, $\mathbb{P}\{\Omega\}=1$, and $\mathbb{P}$ is countably additive: the probability of a countable disjoint union of events is the sum of the individual probabilities.
		\item Given two events $A$ and $B$ if $\mathbb{P}\{B\}\neq 0$ the \emph{conditional probability} of $A$ given $B$ is defined as 
		$$
		\mathbb{P}\{A|B\}=\frac{\mathbb{P}\{A\cap B\}}{\mathbb{P}\{B\}}.
		$$
		\item Let $(\Omega, \mathcal{F},\mathbb{P})$ be a probability space. A \emph{random variable} $X$ is a function from $\Omega$ to $\mathbb{R}\cup\{\pm\infty\}$ such that for each $x\in \mathbb{R}\cup\{\pm\infty\}$, the set $\{X\leq x\}$:=$\{\omega\in\Omega : X(\omega)\leq x\}\in \mathcal{F}$. A random variable $X$ is \emph{discrete} if the range of $X$ is finite or countably infinite.
		\item Given a random variable $X$ on $(\Omega, \mathcal{F},\mathbb{P})$, its \emph{expected value} is 
		$$
		\mathbb{E}[X]=\int_{\Omega}X\mathrm{d}\mathbb{P},
		$$
		where $\int$ is the Lebesgue integral. In the case $X$ is a discrete random variable assuming values in $I$,
		$$
		\mathbb{E}[X]=\sum_{i\in I}i\cdot\mathbb{P}\{X=i\}.
		$$
	\end{varitemize}
\end{definition}
We give an alternative way of computing the expected value of a discrete random variable that will be useful in the remainder of this section.
\begin{proposition}[Telescope Formula, \cite{bremaud_markov_1999}]\label{prop:alt-def}
	Let $X$ be a discrete random variable with values in $\mathbb{N}\cup \{+\infty\}$. Then
	$$
	\mathbb{E}[X]=\sum\limits_{i=1}^{\infty} \mathbb{P}\{X\geq i\}.
	$$
\end{proposition}
\begin{proof}
	Since $X$ is discrete 
	$$
	\mathbb{E}[X]=\sum\limits_{k=1}^{\infty} k\cdot\mathbb{P}\{X=k\} =  \sum\limits_{k=1}^{\infty} \sum\limits_{i=1}^{k}\mathbb{P}\{X=k\}.
	$$
	We can exchange the summations by Fubini-Tonelli Theorem that yields 
	$$
	\mathbb{E}[X]=\sum\limits_{i=1}^{\infty} \sum\limits_{k=i}^{\infty} \mathbb{P}\{X=k\} = \sum\limits_{i=1}^{\infty} \mathbb{P}\{X\geq i\}.
	$$\qed
\end{proof}
If random variables represent the state of a dynamical system through time, we 
call this system a \emph{stochastic process}. In this work, the reduction to 
normal form of a $\lambda$-term is a stochastic process, and in particular a 
discrete time Markov chain. These are a well-studied class of memoryless 
stochastic processes. In fact, they satisfy the Markov property, which means 
that transitions from the current state to the next one do not depend on the 
whole history of the process, but only on the current state.
\begin{definition}[Markov Chain]
	Given a probability space $(\Omega, \mathcal{F},\mathbb{P})$, and a state space $I$ at most countable and measurable with the $\sigma$-algebra of all its subsets, a sequence of random variables $(X_n)_{n\geq 0}$ with values in $I$ is a\emph{ Markov Chain} if and only if it satisfies the \emph{Markov property}, i.e. for each $n\geq 0$, for each $i_0,..,i_n,j \in I$, such that $\mathbb{P}\{X_n=i_n,...,X_0=i_0\}>0$ it holds that
	$$
	\mathbb{P}\{X_{n+1}=j|X_n=i_n,...,X_0=i_0\}=\mathbb{P}\{X_{n+1}=j|X_n=i_n\}=p_{{i_n}j}.
	$$
\end{definition}
We can completely characterise a Markov chain $\mathcal{M}$ as a tuple 
$(I,X_0,P)$ where $I$ is the \emph{state space}, $X_0$ is the \emph{initial 
distribution} and $P=(p_{ij})_{,i,j\in I}$ is the \emph{transition matrix}. The 
first entrance time into a state is a very useful random variable in order to 
study the behaviour of a Markov chain. Particularly interesting is knowing 
whether it is almost-surely finite and with finite expected value.
\begin{definition}[First Entrance Time]
	Let $\mathcal{M}=(I,X_0,P)$ be a Markov chain and $j\in I$ be a state. We call \emph{first entrance time} into $j$ the random variable with values in $\mathbb{N}\cup \{+\infty\}$
	$$
	T_j= \begin{cases}
	\min\{n\geq 1|X_n=j\} & \textnormal{if }\{n\geq 1|X_n=j\}\neq \emptyset,\\
	+\infty & \textnormal{otherwise}.
	\end{cases}
	$$
\end{definition}
We show now how to derive a Markov chain $\mathcal{M_P}=(I,\cdot,P)$ from any FPARS $\mathcal{P}=(A,\mathsf{P})$. First we define a relation $\equiv_\bot$ on the states of $A$ defined for each $a,b\in A$ as
$$
a\equiv_\bot b\text{ if and only if }a,b\in\mathbf{NF}(A)\text{ or }a=b.
$$
Clearly $\equiv_\bot$ is an equivalence relation. We call $A_{\equiv\bot}$ the \emph{quotient set} of $A$ by $\equiv_\bot$ and $\mathsf{trm}$ the \emph{equivalence class} of $a$ if $a\in\mathbf{NF}(A)$. We define $\mathcal{M_P}=(I,\cdot,P)$ in the following way.
$$
I=A_{\equiv\bot},\qquad\qquad
p_{ij}=\begin{cases}
1 & \text{if } i,j\in\mathbf{NF}(A),\\
\mathbb{P}(i\rightarrow j) & \text{otherwise}.\\	
\end{cases}	
$$
Actually, we have defined not one but a \emph{family} of Markov chains, since 
the 
initial distribution is not specified. When we talk about $\mathcal{M_P}$ we 
are implicitly universally quantifying over all the initial laws, which are 
countable once one only considers initial configurations which are Dirac.

It is not obvious that the AST and PAST properties we defined are actually 
meaningful. Now we prove their relationship with classical properties defined 
for Markov chains (as AST and PAST were defined in \cite{bournez_proving_2005, 
ferrer_fioriti_probabilistic_2015}), in a way similar to what have been done in 
\cite{avanzini_probabilistic_2018} for a slightly different model.
\begin{remark}
	Given an FPARS $\mathcal{P}$ and a sequence of configurations $(\rho_i)_{i\in\mathbb{N}}$, from how we have defined $\mathcal{M_P}$, it follows that $|\rho_k| = \mathbb{P}\{T_\mathsf{trm}\geq k\}$.
\end{remark}
\begin{proposition}
	An FPARS $\mathcal{P}$ is AST if and only if $\mathbb{P}\{T_{\mathsf{trm}}<+\infty\}=1$, for the Markov chain $\mathcal{M_P}$.
\end{proposition}
\begin{proof}
	An FPARS $\mathcal{P}$ is AST if and only if $\underset{n\rightarrow +\infty}{\lim}\left|\rho_{n}\right|=0$. Since $|\rho_n| = \mathbb{P}\{T_\mathsf{trm}\geq n\}$, then $\mathbb{P}\{T_\mathsf{trm}< n\}=1-|\rho_n|$. Taking the limit at both sides, we have $\underset{n\rightarrow +\infty}{\lim} \mathbb{P}\{T_\mathsf{trm}< n\} = \mathbb{P}\{T_\mathsf{trm}< +\infty\}= \underset{n\rightarrow +\infty}{\lim} (1-|\rho_n|)=1-\underset{n\rightarrow+\infty}{\lim} |\rho_n|$. Thus $\mathbb{P}\{T_\mathsf{trm}< +\infty\}=1$ if and only if $\underset{n\rightarrow+\infty}{\lim} |\rho_n|=0.$\qed
\end{proof}
\begin{proposition}
	Given an FPARS $(A,\mathsf{P})$, $\rho_0=\mathbf{Dirac}(a)$,
	where $a\in A$ and the computation $(\rho_i)_{i\in\mathbb{N}}$, $\mathbb{E} [ T_{\mathsf{trm}}]=\sum\limits_{i=1}^{\infty} \left|\rho_i\right|=\nsteps{\mathsf{P}}(a)$.
\end{proposition}
\begin{proof}
	By Proposition~\ref{prop:alt-def}, $\mathbb{E} [ T_{\mathsf{trm}}]=\sum\limits_{i=1}^{\infty} \mathbb{P}\{T_{\mathsf{trm}}\geq i\}=\sum\limits_{i=1}^{\infty} \left|\rho_i\right|=\nsteps{\mathsf{P}}(a)$.\qed
\end{proof}

\subsection{Related Works}
ARSs were first considered with probabilistic strategies without nondeterminism in \cite{bournez_probabilistic_2002} and then with nondeterminism in \cite{bournez_proving_2005}. In these works AST and PAST properties were introduced, as a rewriting analogue of recurrence and positive recurrence in Markov chain literature. This framework was refined in \cite{avanzini_probabilistic_2018}, which introduced multidistributions to handle nondeterminism in a simple and uniform way. We were inspired by this latter work and used a similar terminology, in particular partial probability distributions that admit their mass to sum to less then one. However, while in \cite{avanzini_probabilistic_2018} PARSs are first-class citizens, we attach probabilities to a pre-existing ARSs, in order to solve nondeterminism. For this reason we have reformulated their model, getting rid of multidistributions, thus coming up with a very light and simple framework.
\section{Randomised Strategies in the $\lambda$-calculus}\label{section:prostra}
In the previous Sections we have defined all the mathematical machinery needed 
for defining randomised strategies for ARSs, in the abstract. In this Section 
we focus on the $\lambda$-calculus, as defined in Section \ref{section:basic}, 
as the target of our investigation. In deterministic strategies, \emph{the} 
redex which is being reduced is typically chosen according to its position in 
the term. In randomised strategies we have more freedom. Intuitively, we have 
to assign a probability to each redex of any term, making them sum to one. The 
space of possible choices is indeed very large. A first design choice could be 
the answer to the following question: should \emph{every} redex in a term being 
reduced with strictly positive probability? If the answer is positive, then one 
should decide how to define these numbers.

\subsection{The Uniform Strategy}

Maybe the most trivial way in which one could assign probabilities to redexes is in a \emph{uniform} way.

\begin{definition}[Uniform Strategy]
	 $\psuni$ is a randomised reduction strategy for the ARS $(\Lambda,\redbeta)$ such that for each reducible term $\termone$, $\psuni(\termone)=\mu\in\dist{\Lambda}$, where for each $\termtwo\in\Lambda$:
	 $$
	 \mu(\termtwo)=\frac{|\rdxs{\termone\red\termtwo}|}{|\rdxs{\termone}|} 
	 $$
	where $\rdxs{\termone}$ is the set of redex occurrences of $\termone$, $\rdxs{\termone\red\termtwo}=\{\rdxone\in\rdxs{\termone}|\termone\redbeta\termtwo\textrm{ by firing }\rdxone\}$ and $|\cdot|$ denotes the cardinality of a set.
	
\end{definition}
Although very easy to \emph{define}, the uniform strategy is not easy at all to 
\emph{study}. In fact, the number of redexes in terms along a reduction 
sequence 
can grow. As we have seen in the previous Section, we study randomised 
strategies as FPARSs. In the case of the $\lambda$-calculus, we focus our 
attention on FPARSs in the form $(\Lambda_\mathsf{WN},\cdot)$, because we are 
interested in studying quantitative properties about terminating terms. Whether 
$(\Lambda_\mathsf{WN},\psuni)$ is AST is still an open problem, but for sure 
$(\Lambda_\mathsf{WN},\psuni)$ is \emph{not} PAST, as we are going to prove 
with the following counterexample. 
\begin{example}
	Let us  consider the combinator $\termone=(\lambda x.\lambda y.y)\bm{\Delta_4}$, where $\bm{\Delta_4}=\bm{\Delta_2\Delta_2}$ and $\bm{\Delta_2}= \lambda x.(xx)(xx)$. We observe that $\bm{\Delta_4}$ reduces to $\bm{\Delta_4\Delta_4}$. Thus, if we do not reduce leftmost, the number of redexes grows linearly with the length of the reduction sequence. Instead, reducing leftmost, we reach the normal form in one step. Considering the computation $|(\rho_i)|_{i\in\mathbb{N}}$ starting from $\rho_0=\textbf{Dirac}(\termone)$ (in Figure \ref{figure:delta2}) we can compute the sequence $|\rho_i|$.
	\begin{equation*}
	\begin{split}
		|\rho_0|&=1\,,\qquad\qquad\qquad\qquad\,\,|\rho_1|=1\,,\\
		|\rho_2|&=|\rho_1|-|\rho_1|\frac{1}{2}\,,\qquad\qquad
		|\rho_3|=|\rho_2|-|\rho_2|\frac{1}{3}\,.
	\end{split}
	\end{equation*}
	In general, if $i\geq 2$, it holds that
	$$
	|\rho_i|=|\rho_{i-1}|-|\rho_{i-1}|\frac{1}{i}=|\rho_{i-1}|\left( 
	1-\frac{1}{i}\right).
	$$
	Once solved the recurrence relation in the last line, one has 
	$|\rho_i|=\frac{1}{i}$ for each $i\geq 1$. Thus
	$$
	\nsteps{\psuni}(\termone)=\sum\limits_{i=1}^{\infty} \left|\rho_i\right|=\sum\limits_{i=1}^{\infty} \frac{1}{i}=+\infty.
	$$
\end{example}
Since there is a term $\termone\in\Lambda_\mathsf{WN}$ such that $\nsteps{\psuni}(\termone)=+\infty$, the following holds.
\begin{proposition}
	$(\Lambda_\mathsf{WN},\psuni)$ is not PAST.
\end{proposition}
\tikzset{
itria/.style={
	draw,dashed,shape border uses incircle,
	isosceles triangle,shape border rotate=90,yshift=-1.27cm},
}
\begin{figure}
\begin{center}
	\tikzstyle{level 1}=[level distance=1.5cm, sibling distance=2.5cm]
	\tikzstyle{level 2}=[level distance=1.7cm, sibling distance=4cm]
	\tikzstyle{level 3}=[level distance=1.1cm, sibling distance=4cm]
	\fbox{
		\begin{minipage}{.96\textwidth}
			\tikzstyle{level 3}=[level distance=0.5cm, sibling distance=2.5cm]
			\centering
			\begin{tikzpicture}[grow=down]
			\node[term] {$(\lambda x.\lambda y.y)\bm{\Delta_4}$}
			child {
				node[term] {$\lambda y.y$}        
				edge from parent         
				node[left]  {$\frac{1}{2}$}
			}
			child {
				node[term] {$(\lambda x.\lambda y.y)(\bm{\Delta_4\Delta_4})$}        
				child {
					node[term] {$\lambda y.y$}        
					edge from parent         
					node[above]  {$\frac{1}{3}$}
				}
				child {
					node[term] {$(\lambda x.\lambda y.y)((\bm{\Delta_4}\bm{\Delta_4})\bm{\Delta_4})$}
					child{ node[draw=none] {}
						{ node[itria] {$\cdots$} } 
					}
					edge from parent         
					node[right]  {$\frac{1}{3}$}
				}
				child {
					node[term] {$(\lambda x.\lambda y.y)(\bm{\Delta_4}(\bm{\Delta_4\Delta_4}))$}
					child{ node[draw=none] {}
						{ node[itria] {$\cdots$} } 
					}
					edge from parent         
					node[above]  {$\frac{1}{3}$}
				}
				edge from parent         
				node[right]  {$\frac{1}{2}$}
			};
			\end{tikzpicture}
		\end{minipage}}
	\end{center}
	\caption{The tree representing the reduction of term $\termone = (\lambda x.\lambda y.y)\bm{\Delta_4}$ under randomised strategy $\psuni$.}
	\label{figure:delta2}
\end{figure}
\subsection{The Strategy $P_\varepsilon$}
In real programming languages we would like to have a normalising strategy, or 
at least, since we are in probabilistic context, a (positive) almost-surely 
normalising one. For this reason, we devised a new strategy for the ARS 
$(\Lambda,\redbeta)$, simpler than the uniform one, which mixes 
leftmost-outermost and rightmost-innermost strategies. If 
$\varepsilon\in[0,1]$, call $\mathsf{P}_\varepsilon$ the strategy in which the 
leftmost redex is always chosen with probability $\varepsilon$ and the 
rightmost one with probability $1-\varepsilon$.

\begin{definition}[Strategy $\mathsf{P}_\varepsilon$]
	Given a reducible term $\termone$ and $\varepsilon\in[0,1]$, the randomised strategy $\mathsf{P}_\varepsilon$ is such that $\mathsf{P}_\varepsilon(\termone)=\mu\in\dist{\Lambda}$, where for each $\termtwo\in\Lambda$:
	$$
	\mu(\termtwo)=\begin{cases}
	\varepsilon & \text{if }\termone\redbeta\termtwo\text{ by firing the }\pslo\text{-redex,}\\
	1-\varepsilon & \text{if }\termone\redbeta\termtwo\text{ by firing the }\psri\text{-redex,}\\
	0 & \text{otherwise}.\\
	\end{cases}
	$$
\end{definition}

\begin{notation}
	For readability reasons we define for the FPARS $(\Lambda,\mathsf{P}_\varepsilon)$ the family of functions $\explen{\termone}:[0,1]\rightarrow\mathbb{R}\cup\{+\infty\}$ indexed on a term $\termone\in\Lambda$, where
	$\explen{\termone}(\varepsilon)=\nsteps{\mathsf{P}_\varepsilon}(\termone)$.
\end{notation}
\begin{example}
	Let us consider the term $\termone=(\lambda x.y)\bm{\Omega}$ where $\bm{\Omega}=\bm{\omega\omega}$  with $\bm{\omega}=\lambda x.xx$. There are two possible representations of the behaviour of $\mathsf{P}_\varepsilon$ for this term, one as an infinite tree (Figure \ref{figure:trees}a) and another one as a cyclic graph (Figure \ref{figure:trees}b). According to the different representations, we can compute in different ways the probability of reaching normal form and the average derivation length. The results coincide yielding in both cases probability of termination equal to $1$ and average derivation length equal to $\frac{1}{\varepsilon}$, if $\varepsilon\neq 0$. If $\varepsilon=0$, the system never terminates and thus the average derivation length is $+\infty$. 
\end{example}
\begin{figure}[t]
	\begin{center}
		\fbox{
			\begin{minipage}{.96\textwidth}
				\begin{center}
					\tikzstyle{level 1}=[level distance=1.2cm, sibling distance=2.5cm]
					\tikzstyle{level 2}=[level distance=1.2cm, sibling distance=2.5cm]
					\subfloat[]{{\begin{tikzpicture}[grow=down]
							
							\node[term] {$(\lambda x.y)\bm{\Omega}$}
							child {
								node[term] {$y$}        
								edge from parent         
								node[left]  {$\varepsilon$}
							}
							child {
								node[term] {$(\lambda x.y)\bm{\Omega}$}        
								child {
									node[term] {$y$}        
									edge from parent         
									node[left]  {$\varepsilon$}
								}
								child {
									node[term,label=below:{\begin{rotate}{-90}{$\cdots$}\end{rotate}}] {$(\lambda x.y)\bm{\Omega}$}
									edge from parent         
									node[right]  {$1-\varepsilon$}
								}
								edge from parent         
								node[right]  {$1-\varepsilon$}
							};
							\end{tikzpicture}}}
					\qquad\qquad\qquad
					\tikzstyle{level 1}=[level distance=2.1cm, sibling distance=2.5cm]
					\subfloat[]{{\begin{tikzpicture}[grow=down]
							\node[term]  (M) {$(\lambda x.y)\bm{\Omega}$}
							child {
								node[term] {$y$}        
								edge from parent         
								node[left]  {$\varepsilon$}
							};
							\path[-]
							(M) edge [loop above] node {$1-\varepsilon$} ();
							
							\end{tikzpicture}}}
				\end{center}
			\end{minipage}}
		\end{center}
		\caption{The tree (a) and the cyclic graph (b) representing the reduction sequence of $\termone$.}
		\label{figure:trees}
	\end{figure}
The first Theorem we are going to prove about $\mathsf{P}_\varepsilon$ states 
that $\mathsf{P}_\varepsilon$ is an almost-surely normalising strategy. The key 
aspect of this proof is that the $\pslo$-redex has always the \emph{same} 
probability $\varepsilon$ of being reduced, along the whole reduction sequence.
\begin{theorem}
	The FPARS $(\Lambda_\mathsf{WN},\mathsf{P_\varepsilon})$ is PAST whenever $\varepsilon>0$.
\end{theorem}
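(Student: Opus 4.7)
The plan is to apply Foster's theorem with the Lyapunov function $V(\termone) = \nsteps{\pslo}(\termone)$, i.e.\ the length of the leftmost-outermost reduction of $\termone$ to its normal form. This is well-defined and finite on $\Lambda_{\mathsf{WN}}$ because $\pslo$ is a normalizing strategy, and it is bounded below by $0$, which takes care of the first condition of a Lyapunov function.

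For the decrease condition, let $\termone \in \Lambda_{\mathsf{WN}}$ be not in normal form, and write $\termone_\pslo$ and $\termone_\psri$ for the $\pslo$- and $\psri$-reducts of $\termone$. Both remain in $\Lambda_{\mathsf{WN}}$ by confluence (the unique $\beta$-normal form of $\termone$ is still reachable from each $\beta$-reduct). Then $\mathsf{P}_\varepsilon(\termone) = \mu$, where $\mu$ is supported on $\{\termone_\pslo, \termone_\psri\}$ with $\mu(\termone_\pslo) = \varepsilon$ and $\mu(\termone_\psri) = 1-\varepsilon$ (and the Dirac $\mu(\termone_\pslo)=1$ if the two redexes coincide). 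By definition of $\pslo$ we have $V(\termone_\pslo) = V(\termone) - 1$, and by Lemma~\ref{lemma:derivationlenght} applied to the single $\beta$-step $\termone \redbeta \termone_\psri$ we have $V(\termone_\psri) \leq V(\termone)$. Hence
\begin{equation*}
V(\mu) \;=\; \varepsilon\, V(\termone_\pslo) + (1-\varepsilon)\, V(\termone_\psri) \;\leq\; \varepsilon (V(\termone) - 1) + (1-\varepsilon) V(\termone) \;=\; V(\termone) - \varepsilon,
\end{equation*}
so $V(\termone) >_\varepsilon V(\mu)$, which is the second Lyapunov condition with parameter $\varepsilon$.

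Foster's theorem then gives PAST and moreover the bound $\avglenght{P} \leq V(\termone_0)/\varepsilon = \nsteps{\pslo}(\termone_0)/\varepsilon$ for any initial term $\termone_0 \in \Lambda_{\mathsf{WN}}$.

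The only delicate step is the monotonicity inequality $\nsteps{\pslo}(\termone_\psri) \leq \nsteps{\pslo}(\termone)$, which is exactly Lemma~\ref{lemma:derivationlenght}; without it, a rightmost-innermost step could, a priori, create new work for the leftmost-outermost strategy (for instance by duplicating a redex), and the averaging with weight $1-\varepsilon$ would no longer yield a strict decrease. Given that lemma, the rest is a direct computation.
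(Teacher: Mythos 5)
Your proposal is correct and follows essentially the same route as the paper: Foster's theorem with the Lyapunov function $V = \nsteps{\pslo}$, using the fact that a leftmost-outermost step decreases $V$ by exactly $1$ and Lemma~\ref{lemma:derivationlenght} to bound the rightmost-innermost step, yielding $V(\mu) \leq V(\termone) - \varepsilon$ and the bound $\nsteps{\pslo}(\termone_0)/\varepsilon$. You even make explicit a couple of points the paper leaves implicit (closure of $\Lambda_{\mathsf{WN}}$ under reduction and the case where the two redexes coincide).
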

\begin{proof}
	We use Foster's Theorem to prove the claim. Thus, we have to find a suitable Lyapunov function $f:\Lambda_\mathsf{WN}\rightarrow\mathbb{R}$.
	We consider $f=\nsteps{\pslo}$. Certainly, condition (1) is verified since $\nsteps{\pslo}\left(\termone\right)\geq0$ for each $\termone\in\Lambda_\mathsf{WN}$. We have to verify (2). Suppose $\mathsf{P_\varepsilon}(\termone)=\mu$. If $\termone\redlo\termtwo$ and $\termone\redri\termthree$, by Lemma~\ref{lemma:derivationlength} we can write:
	\begin{equation*}
	\begin{split}
	\nsteps{\pslo}\left( \mu\right) & = \nsteps{\pslo}(\termtwo) \cdot \varepsilon+ \nsteps{\pslo}(\termthree)\cdot (1-\varepsilon)\\
	& \leq \left( \nsteps{\pslo}\left( \termone\right)-1\right)  \cdot \varepsilon + \nsteps{\pslo}(\termone)\cdot (1-\varepsilon)\\
	& = \varepsilon\cdot\nsteps{\pslo}\left( \termone\right)-\varepsilon + \nsteps{\pslo}\left( \termone\right)\cdot\left( 1-\varepsilon\right)\\
	& = \nsteps{\pslo}\left( \termone\right)-\varepsilon.
	\end{split}
	\end{equation*}
	Since $0\leq\varepsilon\leq 1$, $\nsteps{\pslo}\left(M\right)>_{\varepsilon}\nsteps{\pslo}\left(\mu\right)$ for each normalising term $\termone$.
	Then, if $\varepsilon>0$, $(\Lambda_\mathsf{WN},\mathsf{P_\varepsilon})$ is PAST and the average number of steps to normal form of a term $\termone$ reduced with strategy $\mathsf{P}_\varepsilon$ is bounded by
	$\frac{\nsteps{\pslo}\left( \termone\right)}{\varepsilon}$.\qed
\end{proof}
The bound we obtain on $\explen{\termone}(\varepsilon)$ from the above proof is very loose and thus it does not give us any information on the actual nature of the function $\explen{\termone}(\varepsilon)$. We show, by means of an example, that the strategy $\mathsf{P}_\varepsilon$ is non-trivial i.e. there exists a term $\termone$ and $0<\varepsilon<1$, such that $\explen{\termone}(\varepsilon)<\explen{\termone}(1)=\nsteps{\pslo}(\termone)<\explen{\termone}(0)=\nsteps{\psri}(\termone)$.
\begin{example}
	Let us consider a family of terms $\termone_n=\termtwo\termthree_n$ where:
	$$
	\termtwo = \lambda x.\underbrace{((\lambda y.z)\bm{\Omega})}_\termfour x\qquad
	\termthree_n = C_n\underbrace{((\lambda x.x)y)}_\termfive\qquad
	C_n = \lambda  x.\underbrace{xx\cdots x}_{n\text{ times}}
	$$
	After quite simple computations one can derive $\explen{\termone_n}(\varepsilon)=(n-3)\varepsilon^3+4\varepsilon^2+\frac{2}{\varepsilon}$. Clearly for $\varepsilon=0$ the expression diverges. If $n\geq 2$ there is a minimum for $0<\varepsilon<1$, and thus $\explen{\termone_n}(\varepsilon)<\explen{\termone_n}(1)=\nsteps{\pslo}(\termone_n)<\explen{\termone_n}(0)=\nsteps{\psri}(\termone_n)=+\infty$. $\explen{\termone_2}(\varepsilon)$ is plotted in Figure \ref{figure:plot}.
	\begin{figure}[t]
          \fbox{
            \begin{minipage}{.96\textwidth}
		\centering
		\begin{tikzpicture}[transform shape,scale=0.7]
		\begin{axis}[
		xmin = 0, xmax = 1,
		ymin = 4, ymax = 10,
		y label style={at={(axis description cs:0.11,.5)},anchor=south},
		domain=0.1:1,
		samples=99,
		no markers,
		line width=1pt,
		xlabel=$\varepsilon$,
		ylabel={$\explen{\termone_2}(\varepsilon)$}
		] 
		\addplot {2*x^(-1) + 4*x^2 - x^3}; 
		\end{axis}
		\end{tikzpicture}
            \end{minipage}}
		\caption{The function $\explen{\termone_2}(\varepsilon)$.}
		\label{figure:plot}
	\end{figure}
\end{example}
Studying the behaviour of $\explen{\termone}(\varepsilon)$ for \emph{an arbitrary} term $\termone$ is
a difficult task, which is outside the scope of this paper. However, we are 
able to characterise 
$\explen{\termone}:[0,1]\rightarrow\mathbb{R}\cup\{+\infty\}$ from an 
analytical point of view, e.g. investigating critical points and convexity. 
This is particularly interesting since we are interested in the efficiency of 
$\mathsf{P_\varepsilon}$. Since $\mathsf{P_0}$ is just $\psri$ and 
$\mathsf{P_1}$ is $\pslo$, a global minimum (respectively, maximum) of 
$\explen{\termone}$ strictly between $0$ and $1$ would mean that reducing under 
$\mathsf{P_\varepsilon}$ could lead to normal form in less (respectively, 
more), steps on average, than reducing deterministically under either $\pslo$ 
or $\psri$. Call $\poly{x}$ the set of polynomials in the unknown $x$ with 
coefficients in $\mathbb{R}$.
\begin{lemma}\label{lemma:poly}
	Every configuration $\rho$ of the FPARS $\left(\Lambda,\mathsf{P}_{\varepsilon}\right)$ is such that $\rho(\termone)\in \poly{\varepsilon}$ for each $\termone\in\Lambda$.
\end{lemma}
\begin{proof}
	We prove this by induction on the length $\ell$ of the computation $\rho_0\rightsquigarrow\rho_1\rightsquigarrow\rho_2\rightsquigarrow\cdots\rho$. If $\ell=0$, $\rho=\rho_0$, thus $\rho(\termone)=1$ if $\rho_0=\textbf{Dirac}(\termone)$, $0$ otherwise. Now let us suppose that for computations with $\ell\leq k$, $\rho(\termone)\in \poly{\varepsilon}$. Let us consider a computation $ \rho_0\rightsquigarrow\rho_1\rightsquigarrow \rho_2\rightsquigarrow\cdots\rho_{k}\rightsquigarrow\rho$ of length $k+1$. $\rho(\termone)=\underset{\termtwo\in\Lambda}{\sum}\rho_{k}\left( \termtwo\right) \cdot\mathbb{P}\left(\termtwo \rightarrow \termone\right)$. Polynomials are closed with respect to addition and multiplication, thus by induction hypothesis $\rho(\termone)\in\poly{\varepsilon}$ if $\mathbb{P}(\termtwo \rightarrow \termone)\in\poly{\varepsilon}$. This is certainly true because $\mathbb{P}(\termtwo \rightarrow \termone)$ is either $0,\varepsilon,\textrm{ or }(1-\varepsilon)$ from the definition of $\mathsf{P}_{\varepsilon}$.\qed
\end{proof}
\begin{theorem}
	For each $\lambda$-term $\termone$, the function $\explen{\termone}(\varepsilon)$ is a power series.
\end{theorem}
\begin{proof}
	Let us consider the FPARS $\mathcal{P}=\left(\Lambda,\mathsf{P}_{\varepsilon}\right)$ and a computation $(\rho_i)_{i\in\mathbb{N}}$ starting from $\rho_0=\textbf{Dirac}(\termone)$. $\supp{\rho_i}$
	is finite for every $i\geq 0$, since the number of redexes in a term is 
	finite and thus $|\rho_i|\in\poly{\varepsilon}$ for every $i\geq 0$, since 
	it is a sum of polynomials by Lemma~\ref{lemma:poly}. Therefore 
	$\explen{\termone}(\varepsilon)=\sum\limits_{i=1}^{\infty} |\rho_i|$ is a 
	series whose terms are polynomials in $\varepsilon$, which, once the terms 
	are reordered, is a power series.\qed
\end{proof}
\subsection{Optimality and Pessimality Results}\label{sect:optimalpessimal}
As a further step in the direction of a full understanding of the nature
of $\explen{\termone}(\varepsilon)$, we study it in the special
case in which $\termone$ is a term of the subcalculi $\lambda A$ and $\lambda 
I$ we have previously introduced. In particular, we prove that 
$\explen{\termone}(\varepsilon)$ has minimum (respectively maximum)
in $\varepsilon=1$ for $\lambda A$-terms (respectively for $\lambda I$-terms).
All we need to do is to lift Theorem~\ref{theorem:optpes}, a result about 
\emph{deterministic}
strategies, to the randomised setting. Some preliminary lemmas are necessary in
order to appropriately do so.

The following two lemmas tell us that the existence of a strictly partial
probability distribution along a computation witnesses the existence
of a \emph{deterministic} computation leading to normal form.
\begin{lemma}\label{lemma:exseq}
Let $(A, \mathsf{P})$ an FPARS and $(\rho_i)_{i\in\mathbb{N}}$ a
computation, where $\rho_0=\mathbf{Dirac}(a_0)$. For each $a\in A$, if there
exists $k\geq 0$ such that $\rho_k(a)>0$, then there exists a
reduction sequence $a_0\rightarrow a_1\rightarrow\cdots\rightarrow
a_{k-1}\rightarrow a$.
\end{lemma}
\begin{proof}
	We argue by induction on $k$. If $k=0$, then the reduction sequence is 
	trivially $a_0=a$. If $k=h$, $\rho_h(a)=\underset{b\in 
	A}{\sum}\rho_{h-1}\left( b\right) \cdot\mathbb{P}\left(b \rightarrow 
	a\right)$. Since $\rho_h(a)>0$, there exists $b\in A$ such that 
	$\rho_{h-1}\left( b\right) \cdot\mathbb{P}\left(b \rightarrow a\right)\neq 
	0$, i.e. $\rho_{h-1}(b)>0$ and $\mathbb{P}\left(b \rightarrow a\right)>0$. 
	Thus, by induction hypothesis, there exists a sequence $a_0\rightarrow 
	a_1\rightarrow \cdots\rightarrow a_{h-2}\rightarrow b$, and $b \rightarrow 
	a$. Hence there exists a reduction sequence $a_0\rightarrow a_1\rightarrow 
	\cdots\rightarrow a_{h-2}\rightarrow b\rightarrow a$.\qed
\end{proof}
\begin{lemma}\label{lemma:exseq2}
	Let $(A, \mathsf{P})$ an FPARS and $(\rho_i)_{i\in\mathbb{N}}$ a computation, where $\rho_0=\mathbf{Dirac}(a_0)$. If there exists $k\geq 1$ such that $|\rho_k|<1$, then there exists a sequence $a_0\rightarrow a_1\rightarrow\cdots\rightarrow a_{j}$ such that $a_{j}$ is in normal form and $j\leq k-1$.
\end{lemma}
	\begin{proof}
		We argue by induction on $k$. If $k=1$, since $|\rho_0|$ is $\mathbf{Dirac}(a_0)$ then $\mathsf{P}(a_0)$ is undefined (otherwise $|\rho_1|=1$). Hence $a_0$ is in normal form. If $k=h$ and $|\rho_{h-1}|<1$ by induction hypothesis we are done. So let us consider the case in which $|\rho_{h-1}|=1$ and $|\rho_h|<1$. We claim that there exists $b\in \mathbf{NF}(A)$ such that $\rho_{h-1}(b)>0$.
		\begin{align*}
			|\rho_h|&=\underset{a\in A}{\sum}\,\underset{b\in A}{\sum}\rho_{h-1}(b) \cdot\mathbb{P}\left(b \rightarrow a\right)\\
			&=\underset{b\in A}{\sum}\,\underset{a\in A}{\sum}\rho_{h-1}(b) \cdot\mathbb{P}\left(b \rightarrow a\right)=\underset{b\in A}{\sum}\left(\rho_{h-1}(b)\underset{a\in A}{\sum}\mathbb{P}\left(b \rightarrow a\right) \right)\\
			&=\underset{b\not\in \mathbf{NF}(A)}{\sum}\left(\rho_{h-1}(b)\underset{a\in A}{\sum}\mathbb{P}\left(b \rightarrow a\right) \right)+\underset{b\in \mathbf{NF}(A)}{\sum}\left(\rho_{h-1}(b)\underset{a\in A}{\sum}\mathbb{P}\left(b \rightarrow a\right) \right).
		\end{align*}
		If there was not $b\in \mathbf{NF}(A)$ such that $\rho_{h-1}(b)>0$, then the second term in the sum would vanish and
		$|\rho_h|=\underset{b\not\in \mathbf{NF}(A)}{\sum}\rho_{h-1}(b)=1$.
		But $|\rho_h|<1$ by hypothesis. Hence there exists $b\in \mathbf{NF}(A)$ such that $\rho_{h-1}(b)>0$ and thus by Lemma \ref{lemma:exseq} there exist a sequence $a_0\rightarrow a_1\rightarrow\cdots\rightarrow a_{h-2}\rightarrow b$.\qed
	\end{proof}
We are almost done: the following lemma tells us that all
configurations along a computation starting from a $\lambda A$-term $\termone$
are proper until the $n$-th configuration, where
$n=\nsteps{\pslo}(\termone)$.
\begin{lemma}\label{lemma:min}
  Given the FPARS $(\Lambda_A, \mathsf{P_\varepsilon})$, and a
  computation $(\rho_i)_{i\in\mathbb{N}}$, where
  $\rho_0=\mathbf{Dirac}(\termone_0)$, for each
  $k\leq\nsteps{\pslo}(\termone_0)$, then $|\rho_k|=1$.
\end{lemma}
\begin{proof}
	Let $n=\nsteps{\pslo}(\termone_0)$. If there were $k\leq n$ such that $|\rho_k|<1$, then by Lemma \ref{lemma:exseq2} we could find a sequence $\termone_0\redbeta\termone_1\redbeta\cdots\redbeta\termone_j$ such that $j\leq k-1$ and $\termone_j$ is normal form. But this is impossible because of Theorem \ref{theorem:optpes}.\qed
\end{proof}
\begin{corollary}
	For each term $\termone$ in $\Lambda_{A}$, $\explen{\termone}(\varepsilon)$ has minimum in $\varepsilon=1$.
\end{corollary}
\begin{proof}
	Let $n=\nsteps{\pslo}(\termone)=\explen{\termone}(1)$.
	\begin{align*}
		\explen{\termone}(\varepsilon)&=\sum\limits_{i=1}^{\infty} |\rho_i|=\sum\limits_{i=1}^{n} |\rho_i|+\sum\limits_{i=n+1}^{\infty} |\rho_i|\overset{\textrm{Lemma \ref{lemma:min}}}{=}\sum\limits_{i=1}^{n}1+\sum\limits_{i=n+1}^{\infty} |\rho_i|\\
		&=n+\sum\limits_{i=n+1}^{\infty}|\rho_i|=\explen{\termone}(1)+\sum\limits_{i=n+1}^{\infty}|\rho_i|\geq \explen{\termone}(1).
	\end{align*} \qed
\end{proof}
Dually, the following Lemma shows that all configurations $\rho_k$ along a computation starting from a $\lambda I$-term $\termone$ are null (i.e. $|\rho_k|=0$), if $k>n=\nsteps{\pslo}(\termone)$. 
\begin{lemma}\label{lemma:max}
	Given the FPARS $(\Lambda_I, \mathsf{P_\varepsilon})$, and a
	computation $(\rho_i)_{i\in\mathbb{N}}$, where
	$\rho_0=\mathbf{Dirac}(\termone_0)$, for each
	$k>\nsteps{\pslo}(\termone_0)$, it holds that $|\rho_k|=0$.
\end{lemma}
\begin{proof}
	Let $n=\nsteps{\pslo}(\termone_0)$. If there were $k>n$ such that $|\rho_k|>0$, then by Lemma \ref{lemma:exseq} we could find a reduction sequence $\termone_0\redbeta\termone_1\redbeta\cdots\redbeta\termone_{k}$. But this is impossible because of Theorem~\ref{theorem:optpes}.\qed
\end{proof}
\begin{corollary}
	For each term $\termone$ in $\Lambda_{I}$, $\explen{\termone}(\varepsilon)$ has maximum in $\varepsilon=1$.
\end{corollary}
\begin{proof}
	Let $n=\nsteps{\pslo}(\termone)=\explen{\termone}(1)$. If $n=+\infty$ we are done. Then we consider $n$ finite.
	\begin{align*}
	\explen{\termone}(\varepsilon)&=\sum\limits_{i=1}^{\infty} |\rho_i|=\sum\limits_{i=1}^{n} |\rho_i|+\sum\limits_{i=n+1}^{\infty} |\rho_i|\overset{\textrm{Lemma \ref{lemma:max}}}{=}\sum\limits_{i=1}^{n} |\rho_i|\leq n= \explen{\termone}(1).
	\end{align*} \qed
\end{proof}
\subsection{Further Results}
We have seen in the previous Section that in particular cases
$\explen{}(\varepsilon)$ has maximum or minimum in
$\varepsilon=1$. However, we know nothing about about the shape of the
curve $\explen{}(\varepsilon)$. In this Section we are going to show
some counterintuitive results about the different shapes
$\explen{}(\varepsilon)$ can have in $\lambda A$, $\lambda I$
and in the full $\lambda$-calculus. In particular, we address the study of 
convexity and critical points. Indeed we are interested in minima and maxima, 
i.e. those points where the randomised strategy is maximally efficient and, 
respectively, inefficient. 

\begin{definition}[Critical Point]
	Let $f:A\subseteq\mathbb{R}\rightarrow\mathbb{R}$. $x\in A$ is a \emph{critical point} of $f$ if $f$ is not derivable in $x$ or the derivative of $f$ in $x$ is $0$.
\end{definition}

First of all, the results in Section~\ref{sect:optimalpessimal}
may suggest that $\explen{\termone}(\cdot)$ could be monotonically
decreasing in $\Lambda_A$, and monotonically increasing in $\Lambda_I$.
This is actually \emph{not the case}, and the counterexamples
are precisely the terms $\termone_A$ and $\termone_I$ we already
considered in Section~\ref{section:basic}, namely:
$$
\termone_I=(\lambda x.x\bm{I})(\lambda x.(\lambda z.zz)(xy))
\qquad
\termone_A=(\lambda x.x\bm{I})(\lambda x.(\lambda z.y)(xy))
$$
The graphs of the two functions $\explen{\termone_I}(\cdot)$
and $\explen{\termone_A}(\cdot)$ are in Figure~\ref{figure:counterexample2}.
As a consequence:
\begin{proposition}
  $\explen{\termone}(\cdot)$ is not monotonic, neither in
  $\Lambda_A$ nor in $\Lambda_I$.
\end{proposition}
\begin{figure}
	\fbox{
		\begin{minipage}{.96\textwidth}
			\centering
			\subfloat[]{{
					\begin{tikzpicture}[transform shape,scale=0.6]
					\begin{axis}[
					xmin = 0, xmax = 1,
					ymin = 4.5, ymax = 5.5,
					y label style={at={(axis description cs:0.11,.5)},anchor=south},
					domain=0:1,
					samples=100,
					no markers,
					line width=1pt,
					xlabel=$\varepsilon$,
					ylabel={$\explen{\termone_I}(\varepsilon)$}
					] 
					\addplot {5 - x^2 + x^3}; 
					\end{axis}
					\end{tikzpicture}
			}}
			\subfloat[]{{
					\begin{tikzpicture}[transform shape,scale=0.6]
					\begin{axis}[
					xmin = 0, xmax = 1,
					ymin = 2.5, ymax = 3.5,
					y label style={at={(axis description cs:0.11,.5)},anchor=south},
					domain=0:1,
					samples=100,
					no markers,
					line width=1pt,
					xlabel=$\varepsilon$,
					ylabel={$\explen{\termone_A}(\varepsilon)$}
					] 
					\addplot {3 + x^2 - x^3}; 
					\end{axis}
					\end{tikzpicture}
			}}
		\end{minipage}}
	\caption{Plot of $\explen{\termone_I}(\varepsilon)$ (a) and $\explen{\termone_A}(\varepsilon)$ (b).}
	\label{figure:counterexample2}
\end{figure}
Looking at the terms $\termone_I$ and $\termone_A$, one immediately
notices that the functions we are considering not only fail to be
monotonic in general, but that they fail to be \emph{convex}
or \emph{concave}. Is this common to all terms? If one considers
terms like
$$
\termone_\cup=((\lambda y.z)(\bm{II}))((\lambda x.xx)(\bm{I}y))
\qquad
\termone_\cap=((\lambda x.xx)(\bm{I}y))((\lambda y.z)(\bm{II}))
$$
one immediately realizes that $\explen{\termone}(\varepsilon)$ can
indeed be concave or concave in certain cases (see Figure
\ref{figure:examples}(a) and Figure \ref{figure:examples}(b) for a
plot). Playing a bit with terms allowed us to find terms in
which $\explen{\termone}(\cdot)$ can be quite wild, having
more than one critical point. Consider, as an example, the term
$$
\termone_\sim=((\lambda y.\termone_\cap)(\bm{II}))((\lambda x.xx)(\bm{I}y))
$$
and the plot of $\explen{\termone_\sim}(\cdot)$, reported in
Figure \ref{figure:examples}(c).
\begin{proposition}
  $\explen{\termone}(\varepsilon)$ can have more than one critical
  point.
\end{proposition}
\begin{figure}[t]
	\fbox{
		\begin{minipage}{.96\textwidth}
	          \begin{center}
			\subfloat[]{{
				\begin{tikzpicture}[transform shape,scale=0.6]
				\begin{axis}[
				xmin = 0, xmax = 1,
				ymin = 2, ymax = 6,
				y label style={at={(axis description cs:0.11,.5)},anchor=south},
				domain=0:1,
				samples=100,
				no markers,
				line width=1pt,
				xlabel=$\varepsilon$,
				ylabel={$\explen{\termone_\cup}(\varepsilon)$}
				] 
				\addplot {4 - 3*x + 4*x^2 - x^3}; 
				\end{axis}
				\end{tikzpicture}
			}}
			\subfloat[]{{
					\begin{tikzpicture}[transform shape,scale=0.6]
					\begin{axis}[
					xmin = 0, xmax = 1,
					ymin = 2, ymax = 6,
					y label style={at={(axis description cs:0.11,.5)},anchor=south},
					domain=0:1,
					samples=100,
					no markers,
					line width=1pt,
					xlabel=$\varepsilon$,
					ylabel={$\explen{\termone_\cap}(\varepsilon)$}
					] 
					\addplot {4 + 3*x - 3*x^2 + x^3 - x^4}; 
					\end{axis}
					\end{tikzpicture}
			}}
                  \end{center}
                  \begin{center}
				\subfloat[]{{
						\begin{tikzpicture}[transform shape,scale=0.6]
						\begin{axis}[
						xmin = 0, xmax = 1,
						ymin = 6, ymax = 10,
						y label style={at={(axis description cs:0.11,.5)},anchor=south},
						domain=0:1,
						samples=100,
						no markers,
						line width=1pt,
						xlabel=$\varepsilon$,
						ylabel={$\explen{\termone_\sim}(\varepsilon)$}
						] 
						\addplot {8 - 3*x + 21*x^2 - 56*x^3 + 75*x^4 - 76*x^5 + 59*x^6 - 21*x^7 + x^8}; 
						\end{axis}
						\end{tikzpicture}
				}}
                  \end{center}
	\end{minipage}}
		\caption{Plot of $\explen{\termone_\cup}(\varepsilon)$ (a), $\explen{\termone_\cap}(\varepsilon)$ (b) and $\explen{\termone_\sim}(\varepsilon)$ (c).}
		\label{figure:examples}
	\end{figure}

\section{Conclusions}\label{section:conclusions}
In this work we have started the study of randomised reduction
strategies for the $\lambda$-calculus. We have defined a family of
examples of such strategies, and we have shown that all of them,
except one, are positive almost-surely normalising. Then we have
studied how those strategies behave in $\lambda A$ (the affine
$\lambda$-calculus) and $\lambda I$, proving optimality and pessimality results.
Moreover, we have shown that our defined
family of strategies behaves in a very complex way in the scope of the full $\lambda$-calculus.

Further work could consist in better analysing the behaviour of the proposed
strategies, in particular trying to characterize classes of $\lambda$-terms for which our strategies
work strictly better than deterministic ones, and to develop some methods
to tune the parameter $\varepsilon$ in order to get good performances.
Moreover, it would be interesting to study the behaviour of randomised
strategies on non-terminating $\lambda$-terms, investigating the perpetuality 
phenomenon.
%

\bibliographystyle{elsarticle-num}
\bibliography{TCS}
\end{document}